\def\BibTeX{{\rm B\kern-.05em{\sc i\kern-.025em b}\kern-.08em
    T\kern-.1667em\lower.7ex\hbox{E}\kern-.125emX}}
\colorlet{LightTeal}{white!70!teal}
\definecolor{bostonuniversityred}{rgb}{0.8, 0.0, 0.0}
\newlength\myindent 
\newtheorem{assumption}{\hspace{0pt}\bf Assumption}
\newtheorem{lemma}{\hspace{0pt}\bf Lemma}
\newtheorem{proposition}{\hspace{0pt}\bf Proposition}
\newtheorem{theorem}{\hspace{0pt}\bf Theorem}
\newtheorem{corollary}{\hspace{0pt}\bf Corollary}
\newtheorem{remark}{\hspace{0pt}\bf Remark}
\newtheorem{definition}{\hspace{0pt}\bf Definition}
\date{\today}
\def\E{\mathbb{E}}
\newcommand{\norm}[1]{\left\|#1\right\|}
\begin{document}


\title{Fast State-Augmented Learning for Wireless \\Resource Allocation with Dual Variable Regression}

\author{Yi\u{g}it~Berkay~Uslu, Navid~NaderiAlizadeh, Mark~Eisen,~\IEEEmembership{Member,~IEEE,}
~Alejandro~Ribeiro,~\IEEEmembership{Fellow,~IEEE}

\thanks{
Yi\u{g}it Berkay Uslu and Alejandro Ribeiro are with the Department of Electrical
and Systems Engineering, University of Pennsylvania, Philadelphia, PA 19104 USA (e-mail: ybuslu@seas.upenn.edu; aribeiro@seas.upenn.edu).
}
\thanks{
Navid NaderiAlizadeh is with the Department of Biostatistics and Bioinformatics, Duke University, Durham, NC 27705 USA (e-mail: navid.naderi@duke.edu).
}
\thanks{
Mark Eisen is with Johns Hopkins Applied Research Laboratory (APL), Laurel, MD 11100 USA (e-mail: mark.eisen@ieee.org).
}
}

\maketitle
\thispagestyle{plain}
\pagestyle{plain}

\begin{abstract}
We consider resource allocation problems in multi-user wireless networks, where the goal is to optimize a network-wide utility function subject to constraints on the ergodic average performance of users. We demonstrate how a state-augmented graph neural network (GNN) parametrization for the resource allocation policy circumvents the drawbacks of the ubiquitous dual subgradient methods by representing the network configurations (or states) as graphs and viewing dual variables as dynamic inputs to the model, treated as graph signals supported over the graphs. Lagrangian maximizing state-augmented policies are learned during the offline training phase, and the dual variables evolve through gradient updates while executing the learned state-augmented policies during the inference phase. Our main contributions are to illustrate how near-optimal initialization of dual multipliers for faster inference can be accomplished with dual variable regression, leveraging a secondary GNN parametrization, and how maximization of the Lagrangian over the multipliers sampled from the dual descent dynamics substantially improves the training of state-augmented models. We demonstrate the superior performance of the proposed algorithm with extensive numerical experiments in a case study of transmit power control. Finally, we prove a convergence result and an exponential probability bound on the excursions of the dual function (iterate) optimality gaps.
\end{abstract}

\begin{IEEEkeywords}
Wireless networks, resource allocation, graph neural networks, constrained learning, state augmentation, dual variable regression.
\end{IEEEkeywords}
\section{Introduction}
\label{sec:intro}

The resource allocation literature spans a broad range of problem formulations and algorithmic approaches. To name a few directions, classical optimization and information-theoretic methods address interference management through scheduling and spectrum sharing~\cite{naderializadeh2014itlinq, yi2015itlinq+, shen2017fplinq}. For distributed optimization, primal-domain methods such as Laplacian-gradient flows and their accelerated variants have been proposed in~\cite{doostmohammadian2024accelerated, doostmohammadian2025momentum} (see~\cite{doostmohammadian2025survey} for a comprehensive survey). Dual decomposition and alternating direction method of multipliers (ADMM)-based approaches~\cite{shi2011iteratively, luo2008dynamic} tackle similar problem classes by operating in the dual domain. On the learning side, a growing body of work~\cite{eisen2020optimal, naderializadeh2021resource, niknam2020intelligent, chowdhury2021unfolding, wang2021unsupervised, letaief2021edge, zhao2021link, nikoloska2021modular, li2022power, huang2023regularization, uslu2024learning, das2025opportunistic, darabi2025diffusion, garcia2025linkscheduling, uslu2025generative, zhao2025backpressure} has demonstrated strong performance by leveraging neural network parametrizations to learn resource allocation policies directly from data.

A general class of stochastic, nonconvex resource optimization problems is considered in \cite{eisen2019learning, eisen2020optimal, ribeiro2012optimal, naderializadeh2022learning}, where the goal is to maximize a network-wide utility function subject to constraints on the long-term average performance of users. For such problems, parametrizing the resource allocation policies and working in the Lagrangian dual domain is preferred. Alternating between Lagrangian maximization over the primal variables and dual subgradient updates yields a \emph{dual gradient descent (DGD) algorithm} that provably samples from a near-optimal and almost-surely feasible stochastic policy \cite{ribeiro2010ergodic}. However, this algorithm faces two practical challenges. Namely, maximizing the Lagrangian at each dual iteration is costly to perform online, and the algorithm samples from an optimal policy only asymptotically. Recently, a class of \emph{state-augmented algorithms} \cite{calvo2021state, StateAugmented_RRM_GNN_naderializadeh_TSP2022, das2025opportunistic} have been proposed to alleviate these challenges by \emph{training a single policy parametrization} that takes both network states and dual variables as inputs and learns Lagrangian maximizers for all dual variable values jointly \emph{during offline training}. At inference, only lightweight dual updates are needed, which produce the temporal randomization of the learned policies needed to sample from a stochastic resource allocation policy with ergodic near-optimality and feasibility guarantees.

In this paper, we propose a \emph{novel state-augmented primal-dual learning algorithm with dual variable regression (SA+DR)} for stochastic optimization problems in wireless networks. Our method improves upon prior implementations of state augmentation~\cite{StateAugmented_RRM_GNN_naderializadeh_TSP2022, calvo2021state} in two important ways.
First, we introduce a dual-regression (DR) policy that initializes the dual multipliers close to their optimal levels during online execution, replacing the default zero initialization used in prior work. This substantially shortens the transient period during which online dual dynamics execute suboptimal and potentially infeasible policies. 
The dual-regressor is trained after the state-augmented policy (SA) by optimizing a regression loss, where the regression targets are near-optimal dual multipliers estimated by rolling out state-augmented dual descent dynamics over the training network configurations, and the regression inputs are the corresponding network configurations. Second, we introduce a training subroutine that samples dual multipliers from dual descent roll-outs during state-augmented training. This replaces the fixed uniform prior used in earlier works with dual multiplier samples that better reflect the distribution encountered during inference, eliminating the need for manual tuning of the training distribution and yielding state-augmented policies with superior performance.

Aside from the algorithmic novelties, we derive a convergence result showing that the dual function iterates regularly visit a small neighborhood of the dual optimum, and an excursion result showing that large deviations between visits are exponentially unlikely. These results characterize the suboptimality of arbitrary multiplier initializations and provide theoretical justification for the dual-regression initialization.

We demonstrate the SA+DR algorithm in a power control setup resembling that of~\cite{StateAugmented_RRM_GNN_naderializadeh_TSP2022}, where the objective is to maximize the ergodic sum-rate, subject to per-user minimum-rate requirements. Both the resource allocation policy and the dual variable regression policy are parametrized by graph neural network architectures, referred to as primal-GNN and dual-GNN, respectively, that operate on a graph representation of the wireless network configuration. In this representation, users correspond to nodes, channel gains define weighted directed edges, and resource allocations, dual variables, and user utilities are treated as node signals on the graph.

We validate the proposed SA+DR method through extensive ablations against state-augmented algorithms without optimal initialization and other baselines, demonstrating faster convergence to near-optimal and feasible policies. We illustrate the policy switching behavior through which the state-augmented algorithm samples from a stochastic policy that is near-optimal and feasible in the ergodic sense, and verify that both the primal-GNN and dual-GNN policies generalize.

The paper is organized as follows. Section~\ref{sec:formulation} describes the problem formulation. In Section~\ref{sec:alg}, we introduce our proposed state-augmented algorithm with dual variable regression. Section~\ref{sec:convergence_results} presents the convergence and excursion results. We introduce the GNN-parametrizations, and present numerical experiments in Section~\ref{sec:experiments}. We conclude the paper in Section~\ref{sec:conclusion} and defer additional details to the Appendices. 
\section{Wireless Resource Allocation Problem}\label{sec:formulation}

This paper focuses on allocating network resources (e.g., frequency bands, transmit powers, etc.) optimally to maximize a network-wide utility function (e.g., sum-rate) subject to constraints on each user's performance (e.g., minimum rate requirements). Both the objective and constraints are defined in terms of expectations of nonconvex functions, and the constraints must hold for each network configuration.

\subsection{Network Setup and Optimization Problem}

Consider a wireless network comprised of $N$ users and operating over a series of time steps $t\in\{0,1,2,\dots,T-1\}$. The channel model includes both large-scale and small-scale (fast) fading effects. Accordingly, we decompose the network state given by the set of all channel gains at time step $t$ into two components. The large-scale channel gains, collected in the network configuration (state) $\bbH \in \ccalH$, are determined primarily by path-loss and shadowing, and they remain constant throughout the $T$ steps. The small-scale channel gains fluctuate over time due to fast-fading effects, and together with the large-scale component, produce the instantaneous network states $\bbH_t \in \ccalH$ at time step $t$. We assume that the allocation policies have access only to the large-scale network configuration.

For a given $\bbH$, we define the resource allocation function (policy) $\bbp: \ccalH \mapsto \reals^N$, and $\bbp(\bbH)$ denotes the vector of resource allocation decisions. These decisions lead to the network-wide ergodic performance vector $\bbr \big(\bbH, \bbp(\bbH) \big) \in \reals^N$, with $\bbr: \ccalH \times \reals^N \mapsto \reals^N$ denoting the performance function. The ergodic performance vector is defined through the instantaneous performance vector,
\begin{equation}
\begin{aligned} \label{eq:ergodic-performance-function}
    \bbr \big( \bbH, \bbp(\bbH) \big) \!&\coloneqq\! \E_{\h_t \vert \h} \left[ \bbr \big( \bbH_t, \bbp(\bbH) \big) \right] \!\approx\! \frac{1}{T} \sum_{t = 0}^{T-1} \bbr \big( \bbH_t, \bbp(\bbH) \big),
\end{aligned}
\end{equation}
where, with a slight abuse of notation, we denote by $\bbr$ both ergodic and instantaneous performance functions.

Given a concave network-wide objective utility $f_0: \reals^N \to \reals$, a concave constraint utility $\bbf : \reals^N \to \reals^N$, and any $\h$, we write the unparametrized resource allocation problem as
\begin{subequations}\label{eq:nonparam_problem}
\begin{alignat}{2}
    P^\star(\h) = &\maximum_{\bbp(\h)} &~&  f_0 \left( \bbr \big(\bbH, \bbp(\bbH) \big) \right),\label{eq:objective_non_param}  \\
    &~\st &~&  \bbf \left( \bbr \big(\bbH, \bbp(\bbH) \big) \right) \geq \bbzero.
\end{alignat}
\end{subequations}
\noindent In~\eqref{eq:nonparam_problem}, we wrote $\f = (f_1, \ldots, f_N)^\top$ to denote the vector-valued function whose $i$th entry $f_i$ corresponds to the constraint utility of the $i$th agent. Note that the objective and the constraints are defined in terms of the \emph{ergodic average} network performance vector, rather than the instantaneous performance vector [cf.~\eqref{eq:ergodic-performance-function}]. The goal is to find the optimal allocations $\bbp^\star(\bbH)$ that maximize the objective utility $f_0$ while satisfying all constraint utilities $\bbf$ for any given $\h$. 

\begin{remark}
Although we instantiate $\bbH$ in terms of large-scale channel gains, our formulation, algorithms, and the theoretical analysis carry through unchanged if $\bbH$ instead represents noisy, quantized, or otherwise imperfect  measurements.
\end{remark}

\subsection{Parametrized Dual Problem}
Solving~\eqref{eq:nonparam_problem} entails an infinite-dimensional search over all resource allocation functions $\bbp$ for every $\bbH$, which is practically infeasible. We therefore restrict our search to \emph{parametrized} resource allocation functions $\bbp_{\bbtheta}(\bbH; \bbtheta)$ indexed by a finite-dimensional set of parameters $\bbtheta \in \bbTheta$, and consider a family (distribution) of network configurations $\Dh$. This yields the parametrized resource allocation problem,
\begin{subequations}\label{eq:param_problem}
\begin{alignat}{2}
    P_{\bbtheta}^{\star}=& \maximum_{\bbtheta \in \bbTheta} ~&& \E_{\h \sim \Dh} \left[ f_0 \Big( \bbr \big(\bbH, \bbp_{\bbtheta}(\bbH; \bbtheta) \big) \Big) \right] ,\label{eq:objective_non_param}  \\
    &~\st &~&  \bbf \Big( \bbr \big(\bbH, \bbp_{\bbtheta}(\bbH; \bbtheta) \big) \Big) \geq \bbzero,\quad \Dh\text{-a.e.} \label{eq:min_rate_constraint_param}
\end{alignat}
\end{subequations}
where, different from~\eqref{eq:nonparam_problem}, the maximization is now performed over the set of parameters $\bbtheta \in \bbTheta$ and the utility constraints are to hold for almost all network configurations $\h \sim \Dh$. The problem defined in \eqref{eq:param_problem} is not convex in $\bbtheta$ in general, and obtaining approximate maximizers with feasibility guarantees is challenging. We work around this by moving to the Lagrangian dual domain. Introducing the nonnegative dual multipliers $\bblambda: \ccalH \to \reals^N_{+}$ for the constraints of each $\bbH$ in~\eqref{eq:min_rate_constraint_param}, we define the parametrized Lagrangian for \eqref{eq:param_problem} as
\begin{align}
  \ccalL_{\bbtheta}(\bbtheta, \bblambda) & \coloneqq \E_{\h \sim \Dh} \bigg[ \ccalL_{\bbtheta}\big( \bbtheta, \bblambda(\bbH); \bbH \big)  \bigg], \label{eq:param_lagrangian}
\end{align}
where $\ccalL_{\bbtheta}(\bbtheta, \bblambda(\bbH); \bbH)$ is defined as
\begin{align} \label{eq:param_lagrangian_decomposed}
    \ccalL_{\bbtheta}(\bbtheta, \bblambda(\bbH); \bbH) &\coloneqq f_0 \Big( \bbr \big(\bbH, \bbp_{\bbtheta}(\bbH; \bbtheta) \big) \Big) \nonumber \\
  &\qquad+ \bblambda(\bbH) \cdot \bbf \Big( \bbr \big(\bbH, \bbp_{\bbtheta}(\bbH; \bbtheta) \big) \Big).
\end{align}
Since the constraints hold per-configuration and the objective sums over configurations, the problem decomposes into independent subproblems indexed by $\bbH$. We exploit this separable structure and suppress the dependence on $\bbH$ throughout.

We define a Lagrangian maximizer,
\begin{align}\label{eq:lagrangian_maximizer_theta}
\bbtheta^{\dagger}(\bblambda) \in \argmax_{\bbtheta \in \bbTheta} \Ltheta(\bbtheta, \bblambda),
\end{align}
and write the parametrized dual function as 
\begin{align} \label{eq:parametrized-dual-function}
\gtheta(\bblambda) \coloneqq \Ltheta \big( \bbtheta^\dagger(\bblambda), \bblambda \big) = \max_{\bbtheta \in \bbTheta} \ccalL_{\bbtheta}(\bbtheta, \bblambda).
\end{align} 
Since the dual function $g_{\bbtheta}(\bblambda)$ upper bounds $P^\star_{\bbtheta}$ for any $\bblambda \succcurlyeq \bb0$, the dual problem seeks the tightest upper bound,
\begin{align}\label{eq:param_dual_problem}
    D_{\bbtheta}^{\star} \coloneqq g_{\bbtheta}(\bblambda^\star) = \min_{\bblambda} g_{\bbtheta}(\bblambda).
\end{align}
\noindent In~\eqref{eq:param_dual_problem}, $D^\star_{\bbtheta}$ is the dual optimum, attained by an optimal multiplier $\bblambda^\star$. Both the Lagrangian maximizers $\bbtheta^\dagger(\bblambda)$ and the optimal  multipliers $\bblambda^\star$ are a set in general. In particular, $\bblambda^\star$ will refer to either the set of all optimal dual multipliers or any element of this set, with due clarification when needed.

The dual problem is appealing because finding Lagrangian maximizers is an unconstrained problem. Moreover, the dual function, being the pointwise infimum over affine functions~\cite[p.~81]{boyd2004convex}, is always convex in $\bblambda$. Finally, a subgradient of the dual function at a given $\bblambda$, denoted by $\bbs_{\bbtheta}(\bblambda) \in \partial g_{\bbtheta}(\bblambda)$, is readily available by Danskin's theorem~\cite{danskin2012theory} as
\begin{align} \label{eq:param_dual_subgradient}
\bbs_{\bbtheta}(\bblambda) \coloneqq \bbf \Big( \bbr \big( \bbH, \bbp_{\bbtheta} \big( \bbH; \bbtheta^\dagger(\bblambda) \big)  \Big).
\end{align}
Since the minimization in~\eqref{eq:param_dual_problem} is a convex problem and \eqref{eq:param_dual_subgradient} is a descent direction for the dual function, a solution to the dual problem~\eqref{eq:param_dual_problem} can be obtained by a dual descent method that alternates between unconstrained primal maximization and dual subgradient descent.

\begin{remark}
The duality gap of the parametrized problem $D^\star_{\bbtheta} - P^\star_{\bbtheta}$ is nonzero in general. However, the underlying unparametrized, functional problem, e.g., \eqref{eq:nonparam_problem}, exhibits zero duality gap under mild conditions, and a sufficiently rich parametrization leads to a small duality gap~\cite{elenter2024nearoptimal}. 
\end{remark}

\subsection{Dual Gradient Descent (DGD) Algorithm}

Given an initial dual multiplier $\bblambda_0$, and a dual step size $\eta_{\bblambda}$, a dual (sub)gradient descent algorithm alternates between primal maximization and dual subgradient updates,
\begin{subequations}\label{eq:DGA}
\begin{alignat}{2}
    \bbtheta_k &= \bbtheta^{\dagger}(\bblambda_k) \in \argmax_{\bbtheta \in \bbTheta} \ccalL_{\bbtheta}(\bbtheta, \bblambda_k), \label{eq:DGA:primal} \\
    \bblambda_{k+1} &= \Big[ \bblambda_k - \eta_{\bblambda} \bbs_{\bbtheta}(\bblambda_k) \Big]_{+}, \label{eq:DGA:dual}
\end{alignat}
\end{subequations}
\noindent where $[\cdot]_+ \coloneqq \max(\cdot, \bb0)$ is the projection to the nonnegative orthant. Under the assumptions of bounded subgradients and the existence of strictly feasible primal variables, the sequence of iterates $(\bbtheta_k, \bblambda_k)_{k \geq 0}$ generated by the DGD algorithm in \eqref{eq:DGA} is asymptotically feasible, i.e.,
\begin{align} \label{eq:DGD:as-feasibility}
    \lim_{K \to \infty} \frac{1}{K} \sum_{k = 0}^{K-1} \bbf \left( \bbr \Big( \h, \bbp_{\bbtheta} \big(\h; \bbtheta^\dagger(\bblambda_k) \big) \Big) \right) \geq \bb0, \;\, a.s.,
\end{align}
and near-optimal, i.e.,
\begin{align} \label{eq:DGD:near-optimality}
    \lim_{K \to \infty} \! \E \left[ \frac{1}{K} \sum_{k = 0}^{K-1} f_0 \left( \bbr \Big( \h, \bbp_{\bbtheta} \big( \h; \bbtheta^\dagger(\bblambda_k) \big) \Big) \right) \right] \geq D^\star_{\bbtheta} - \ccalO(\eta_{\bblambda}).
\end{align}
This result has been proven several times in~\cite{ribeiro2010ergodic, StateAugmented_RRM_GNN_naderializadeh_TSP2022, calvo2021state} under different setups, hence, we omit the proof. For future reference, we denote by $\Dlambda(\h; \bbtheta^\dagger)$ the dual multiplier trajectories induced by the dual updates in \eqref{eq:DGA:dual}.

We note that the sequence of DGD iterates need not converge to a pair of optimal variables $(\bbtheta^\star, \bblambda^\star)$---optimality and feasibility guarantees hold for the sequential execution of the generated trajectories, not for any individual iterates or their averages (see also Fig.~\ref{fig:policy-switching}). Consequently, the DGD method cannot be stopped at an arbitrary iteration and should be run online to ensure the optimality of trajectories, which is impractical because each dual iteration requires solving a Lagrangian maximization subproblem. The state-augmented learning algorithm introduced in the next section resolves this by learning Lagrangian maximizers for all dual variables during offline training, reducing online computation to lightweight, stochastic dual updates.

\begin{remark}
The optimal solutions to our formulation are in general stochastic policies. The set of achievable ergodic performance vectors under deterministic policies is nonconvex and the Pareto-optimal boundary is attainable only through temporal randomization among deterministic policies. The dual dynamics in~\eqref{eq:DGA} naturally generate this randomization, making ergodic feasibility the appropriate notion for our setting. Per-iteration feasibility is nonetheless achievable when the constraints are convex, as the guarantees in~\eqref{eq:DGD:as-feasibility}--\eqref{eq:DGD:near-optimality} rely on a convex hull relaxation~\cite[Appendix B]{StateAugmented_RRM_GNN_naderializadeh_TSP2022}.
\end{remark}
\section{State-Augmented Resource Allocation with Dual Variable Regression}\label{sec:alg}

\subsection{State-Augmented Parametrization}
A state-augmented parametrization, first proposed in~\cite{calvo2021state}, interprets the dual multipliers as additional inputs to the policy, and is trained to learn Lagrangian maximizing policies for all dual variable values and network configurations jointly. We write a state-augmented policy $\bbp_{\bbphi}: \ccalH \times \reals^N_+ \times \bbPhi \to \reals^N$ as $\bbp_{\bbphi}(\bbH, \bblambda; \bbphi)$ with parameters $\bbphi \in \bbPhi$. The corresponding Lagrangian is given by
\begin{align}
\ccalL_{\bbphi}(\bbphi; \bblambda, \bbH) &= f_0 \Big( \bbr \big(\bbH, \bbp_{\bbphi}(\bbH, \bblambda; \bbphi) \big) \Big) \nonumber \\
  &+ \bblambda \cdot \bbf \Big( \bbr \big(\bbH, \bbp_{\bbphi}(\bbH, \bblambda; \bbphi) \big) \Big).\label{eq:param_lagrangian_state_augmented_over_graph_distribution}
\end{align}%
Considering a joint probability distribution over the dual multipliers and network configurations $\Dlambdah$, we maximize an expectation over the Lagrangian, i.e., 
\begin{align}
\bbphi^{\star} &\in \argmax_{\bbphi \in \bbPhi}\; \E_{(\bblambda, \bbH) \sim \Dlambdah} \big[ \ccalL_{\bbphi}(\bbphi; \bblambda, \bbH) \big]. \label{eq:optimal_state_augmented_over_graph_distribution}
\end{align}
\noindent An optimal $\bbphi^\star$ that solves \eqref{eq:optimal_state_augmented_over_graph_distribution} approximates the resource allocation decisions made by the DGD iterates $\bbtheta^\dagger(\bblambda)$ given in \eqref{eq:DGA:primal}, i.e., we have $\bbp_{\bbphi}(\bblambda_k, \bbH; \bbphi^\star) \approx \bbp_{\bbtheta}\big(\bbH; \bbtheta^\dagger(\bblambda_k) \big)$ and $\ccalL_{\bbphi}(\bbphi^\star; \bblambda_k) \approx \ccalL_{\bbtheta} \big(\bblambda_k; \bbtheta^{\dagger}(\bblambda_k) \big)$, provided that the parameter class $\bbPhi$ is sufficiently expressive and the training distribution $\Dlambdah$ adequately covers the dual multiplier dynamics of~\eqref{eq:DGA:dual}. The former condition is formalized through use of near-universal parametrizations, while for the latter, we propose an effective sampling scheme in Section~\ref{sec:proposed-sampling-and-initialization}.

\begin{definition}[$\nu$-universal parametrizations {\cite[Def.~1]{StateAugmented_RRM_GNN_naderializadeh_TSP2022}}]\label{def:near-universal} 
    A state-augmented parametrization $\bbp_{\bbphi}(\bbH, \bblambda; \bbphi)$ for $\bbphi \in \bbPhi$ is near-universal with degree $\nu > 0$, or equivalently, $\nu$-universal, for functions $\bbp_{\bbtheta}(\bbH; \bbtheta^\dagger(\bblambda))$ if there exists $\bbphi \in \bbPhi$ such that
    \begin{align}
        \hspace{-.5em}\E_{\Dlambda(\h; \bbtheta^\dagger)} \| \bbp_{\bbphi}(\bbH, \bblambda; \bbphi) - \bbp_{\bbtheta}\big(\bbH, \bbtheta^\dagger(\bblambda)\big) \| \leq \nu, \;\, \ccalH\text{-a.e.}%
    \end{align}
\end{definition}
\noindent State-augmented learning \emph{eliminates the need for online computation of optimal primal variables} $\bbtheta^\dagger(\bblambda_k)$ by introducing a single parametrization $\bbphi$. The expected Lagrangian maximization step in~\eqref{eq:optimal_state_augmented_over_graph_distribution} is performed jointly for all possible networks and dual multipliers drawn from $\Dlambdah$ during offline training, while \emph{only the dual dynamics are run during online execution.} We utilize a trained state-augmented policy $\bbphi^\star$ to replace the two-step update sequence in~\eqref{eq:DGA} with (stochastic) updates,
\begin{align}\label{eq:dual_dynamics}
&\hspace{-.5em}\bblambda_{k+1} \!=\! \left[\bblambda_k \!-\! \eta_{\bblambda} \bbf \Bigg( \!\! \frac{1}{T_0} \!\!\! \! \sum_{t=kT_0}^{(k+1)T_0 - 1} \! \! \! \! \! \! \bbr \Big(\bbH_t, \bbp_{\bbphi}(\bbH, \bblambda_k; \bbphi^\star) \Big) \! \Bigg) \! \right]_+ \! \!\!\!.
\end{align}
\noindent In \eqref{eq:dual_dynamics}, trained state-augmented parametrization executes Lagrangian maximizing policies $\bbp_{\bbphi}(\bbH, \bblambda_{\lfloor t / T_0 \rfloor}; \bbphi^\star)$ at each time step $t$ and stochastic dual updates descend the parametrized Lagrangian at every $T_0$ steps. Here, $k\in\{0,1,\dots,K-1\}$ is an iteration index with $K=\lfloor T / T_0 \rfloor$, and $T_0$ denotes the time duration between consecutive dual variable updates. Note that the constraints are evaluated through the $T_0$-averaged instantaneous performance vectors, which serve as stochastic estimates of the ergodic performance [cf.~\eqref{eq:ergodic-performance-function}].

Optimality and feasibility guarantees, analogous to those of the DGD algorithm [cf.~\eqref{eq:DGD:as-feasibility} and \eqref{eq:DGD:near-optimality}], have been proven for the state-augmented trajectories generated by \eqref{eq:dual_dynamics}, as formalized in the following theorem.\footnote{The state-augmented policies in \cite{StateAugmented_RRM_GNN_naderializadeh_TSP2022} operate on the instantaneous network states $\h_t$, whereas our policies operate on the large-scale network configuration $\h$ [cf.~\eqref{eq:ergodic-performance-function}]. However, the theoretical analysis remains the same.}

\begin{theorem}{\cite[Thm.~2]{StateAugmented_RRM_GNN_naderializadeh_TSP2022}} \label{thm:state-augmentation-optimality}
    Assuming the boundedness and strict feasibility of the constraints, expectation-wise Lipschitzness of the network-wide objective utility function $f_0$ and the constraint utility $\bbf$, and the $\nu$-universality of the state-augmented parameterization $\bbp^{\bbphi}(\bbH, \bblambda; \bbphi)$, along with the expected values of the objective and utility constraints within each iteration providing unbiased estimates of the objective and constraints in \eqref{eq:param_problem}, the sequence of decisions made by the state-augmented algorithm in~\eqref{eq:dual_dynamics} is feasible, i.e.,
\begin{align}\label{eq:thm_feasibility_state_augmented}
   \hspace{-0.25cm} \lim_{T\to\infty} \! \bbf \! \left( \! \frac{1}{T}  \sum_{t=0}^{T-1} \bbr \Big[ \bbH_t, \bbp_{\bbphi}\!\left(\bbH, \bblambda_{\lfloor t/T_0 \rfloor}; \bbphi^{\star}\right) \Big] \!\right) \geq \bbzero, \;\; a.s.,
    \end{align}
    and near-optimal, i.e.,
    \begin{align}
    \lim_{T\to\infty} \E & \left[ f_0 \left( \frac{1}{T} \sum_{t=0}^{T-1} \bbr \Big[\bbH_t, \bbp_{\bbphi}\left(\bbH, \bblambda_{\lfloor t/T_0 \rfloor}; \bbphi^{\star}\right) \Big] \right)\right] \nonumber \\
    &\geq D^{\star}_{\bbtheta} - \ccalO(\eta_{\bblambda}) - \ccalO(\nu).
    \label{eq:thm_optimality_state_augmented}
    \end{align}
\end{theorem}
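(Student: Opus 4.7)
\begin{myproof}[sketch]
The plan is to treat the state-augmented recursion~\eqref{eq:dual_dynamics} as a noisy variant of the DGD iteration~\eqref{eq:DGA}, and to adapt the classical dual subgradient analysis so that the $\nu$-universality gap of the state-augmented parametrization enters only as an additive error. Because of the separable structure noted after~\eqref{eq:param_lagrangian_decomposed}, I would work per-configuration, fixing $\h$ and relying on the unbiased-iterate assumption to identify the ergodic average of $\bbr(\h_t,\cdot)$ with its expectation. The central technical device is a Lyapunov recursion on $\|\bblambda_k-\bblambda\|^2$ for an arbitrary nonnegative comparator $\bblambda$.

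The first step uses non-expansiveness of the projection $[\cdot]_+$ applied to~\eqref{eq:dual_dynamics}. Writing
\begin{equation*}
\tbs_k \;:=\; \bbf\!\left(\frac{1}{T_0}\sum_{t=kT_0}^{(k+1)T_0-1}\bbr\big(\h_t,\bbp_{\bbphi}(\h,\bblambda_k;\bbphi^\star)\big)\right),
\end{equation*}
one obtains the familiar inequality $\|\bblambda_{k+1}-\bblambda\|^2\le \|\bblambda_k-\bblambda\|^2-2\eta_{\bblambda}(\bblambda_k-\bblambda)^\top\tbs_k+\eta_{\bblambda}^2\|\tbs_k\|^2$. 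The unbiased-iterate assumption equates $\E[\tbs_k\mid\bblambda_k]$ with $\bbf\big(\bbr(\h,\bbp_{\bbphi}(\h,\bblambda_k;\bbphi^\star))\big)$, and Definition~\ref{def:near-universal} together with expectation-wise Lipschitzness of $\bbf$ shows that this quantity differs from the true parametrized dual subgradient $\bbs_{\bbtheta}(\bblambda_k)$ in~\eqref{eq:param_dual_subgradient} by at most $\ccalO(\nu)$. Combining with the defining convexity inequality $(\bblambda_k-\bblambda)^\top\bbs_{\bbtheta}(\bblambda_k)\ge g_{\bbtheta}(\bblambda_k)-g_{\bbtheta}(\bblambda)$ converts the cross term into a dual-value gap, at the price of an $\ccalO(\eta_{\bblambda}\nu)$ slack, while boundedness of $\bbf$ controls the $\eta_{\bblambda}^2\|\tbs_k\|^2$ term by $\ccalO(\eta_{\bblambda}^2)$.

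Telescoping this recursion from $k=0$ to $K-1$, taking expectations, and dividing by $K\eta_{\bblambda}$ produces two consequences. Setting $\bblambda=\bblambda^\star$ yields $\tfrac{1}{K}\sum_k\E[g_{\bbtheta}(\bblambda_k)]-D_{\bbtheta}^\star=\ccalO\!\big(1/(K\eta_{\bblambda})\big)+\ccalO(\eta_{\bblambda})+\ccalO(\nu)$, and the same recursion, combined with strict feasibility (Slater), gives $\limsup_k\|\bblambda_k\|<\infty$ almost surely. Bounded duals imply ergodic feasibility: were the running average of $\tbs_k$ bounded away from $\bbzero$ with positive probability, the unprojected dual update would drive $\|\bblambda_k\|$ to infinity. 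Pushing the time average inside $\bbf$ via Jensen's inequality (using concavity of $\bbf$) then yields~\eqref{eq:thm_feasibility_state_augmented}. For the near-optimality claim~\eqref{eq:thm_optimality_state_augmented}, I would rewrite the averaged Lagrangian values as $f_0(\cdot)+\bblambda_k^\top\bbf(\cdot)$ evaluated at the state-augmented policies, apply concavity of $f_0$ and Jensen to move the average inside $f_0$, and absorb the residual $\tfrac{1}{K}\sum_k\bblambda_k^\top\bbf(\cdot)$ cross term using boundedness of $\{\bblambda_k\}$ together with the feasibility result, obtaining the $D_{\bbtheta}^\star-\ccalO(\eta_{\bblambda})-\ccalO(\nu)$ bound.

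The hardest part is a clean propagation of the $\nu$-universality error: the bound in Definition~\ref{def:near-universal} is on policy outputs, whereas the Lyapunov argument needs bounds on Lagrangian values and subgradients, so one must chain expectation-wise Lipschitzness of $f_0$ and $\bbf$ through the stochastic $T_0$-average of the instantaneous performance without inflating the constant from $\nu$ to something larger. A secondary nuisance is that $\bblambda^\star$ need not be unique, so the Lyapunov bound must be taken with respect to an arbitrary but fixed element of the optimal dual set, with the initial slack $\|\bblambda_0-\bblambda^\star\|$ absorbed into the $\ccalO(1/(K\eta_{\bblambda}))$ term that vanishes in the ergodic limit.
\end{myproof}
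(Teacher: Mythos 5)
The paper does not actually prove this theorem: it is imported verbatim as \cite[Thm.~2]{StateAugmented_RRM_GNN_naderializadeh_TSP2022} (see the footnote noting only that the analysis carries over under the unbiased-iterates assumption), so there is no in-paper proof to compare against. Your sketch follows the same route as the cited reference — non-expansiveness of the projection, a Lyapunov recursion on $\|\bblambda_k-\bblambda\|^2$ with an arbitrary comparator, Slater's condition to bound the duals, boundedness implying ergodic feasibility, and telescoping plus Jensen for near-optimality — so at the level of strategy it is the right reconstruction.

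Two steps are looser than they should be. First, the $\nu$-universality propagation: Definition~\ref{def:near-universal} bounds $\E_{\Dlambda(\h;\bbtheta^\dagger)}\|\bbp_{\bbphi}-\bbp_{\bbtheta}(\cdot,\bbtheta^\dagger(\bblambda))\|$ \emph{in expectation over the dual-multiplier distribution}, not pointwise in $\bblambda_k$, and $\bbphi^\star$ is the maximizer of the \emph{expected} Lagrangian in~\eqref{eq:optimal_state_augmented_over_graph_distribution}, not the near-universal approximant itself. The correct chain is: there exists $\tilde{\bbphi}$ within $\nu$ of the DGD policies, expectation-wise Lipschitzness gives $\E[\ccalL_{\bbphi}(\tilde{\bbphi};\bblambda,\bbH)]\ge\E[g_{\bbtheta}(\bblambda)]-\ccalO(\nu)$, and optimality of $\bbphi^\star$ for the expected Lagrangian transfers this bound to $\bbphi^\star$ — all in expectation over the trajectory distribution. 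Your per-iterate "the expected stochastic subgradient differs from $\bbs_{\bbtheta}(\bblambda_k)$ by $\ccalO(\nu)$" is not what the definition delivers. Second, the residual cross term $\tfrac{1}{K}\sum_k\bblambda_k^\top\bbf(\cdot)$ in the optimality argument is \emph{not} controlled by "boundedness of $\{\bblambda_k\}$ together with the feasibility result": $\bblambda_k$ is large exactly when $\bbf$ is negative, so the correlation matters. The standard fix is already in your toolbox — run the same telescoping recursion with the comparator $\bblambda=\bbzero$ to get $\tfrac{1}{K}\sum_k\E[\bblambda_k^\top\tbs_k]\le\tfrac{\eta_{\bblambda}S^2}{2}+\tfrac{\|\bblambda_0\|^2}{2\eta_{\bblambda}K}$ (this is precisely the ergodic complementary slackness step, cf.\ Lemma~\ref{lemma:finite_time_ergodic_complementary_slackness} in the appendix). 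With those two repairs the sketch matches the reference proof.
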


Theorem~\ref{thm:state-augmentation-optimality} shows that state-augmented policies are almost surely feasible and near-optimal given sufficient run time, with an additive optimality gap that can be made arbitrarily small through a small dual step size $\eta_{\bblambda}$ and near-universal parametrizations. However, in practical deployments where network dynamics shift quickly, a small $\eta_{\bblambda}$ paired with an improper initialization leads to a prolonged transient of suboptimal decisions. This underscores the need for initializing the dual dynamics close to the optimal multipliers $\bblambda^\star(\bbH)$.


\subsection{Proposed Sampling \& Initialization of Dual Multipliers}\label{sec:proposed-sampling-and-initialization}

Departing from prior state-augmented approaches where dual multipliers are drawn from a uniform prior during training and initialized at zero during inference, we propose to (i) train the state-augmented policy $\bbphi$ with dual multipliers sampled from the associated dual descent trajectories for each training network configuration $\bbH$, and (ii) initialize the dual multipliers at their near-optimal levels $\bblambda^\star(\bbH)$ during inference.

Our multiplier sampling scheme periodically rolls out state-augmented dual dynamics at designated training checkpoints and saves the generated trajectories to a buffer, from which dual multipliers are sampled until the next checkpoint. Starting from an initial prior $\Dlambda(\bbH)$ (e.g., uniform), the procedure gradually adjusts the sampling distributions to match the dual descent trajectories $\Dlambda(\bbH; \bbphi)$. This way, we aim to settle on an optimal policy $\bbphi^\star$ with training distributions $\Dlambda(\h) \approx \Dlambda(\h; \bbphi^\star) \approx \Dlambda(\h; \bbtheta^\dagger)$. Implementation and training details are provided in Appendix~\ref{app:sadr_training}.

Given $\bbH$ and a trained state-augmented policy $\bbphi^\star$, we estimate an optimal dual multiplier $\bblambda^\star(\bbH)$ by generating trajectories of dual dynamics and computing the expected time-average of the dual multiplier iterates,
\begin{align}\label{eq:ergodic_average_dual_multiplier}
    \bblambda^\dagger(\bbH; \bbphi^\star) =& \lim_{K \to \infty} \E_{\bblambda_k \sim \Dlambda(\h; \bbphi^\star)} \Bigg[ \frac{1}{K} \sum_{k=0}^{K-1} \bblambda_k(\bbH) \Bigg],
\end{align}
\noindent where we denote by $\bblambda^\dagger(\bbH; \bbphi^\star)$ an optimal dual multiplier estimate, and the expectation is over the randomness of the state-augmented dynamics $\{ \bblambda_k(\h) \}_{k \geq 0}$ in \eqref{eq:dual_dynamics} generated by $\bbphi^\star$ for a given $\h$. In practice, we approximate~\eqref{eq:ergodic_average_dual_multiplier} by rolling out a batch of trajectories for sufficiently large $K$. This allows for a fast initialization of the online dynamics for any given $\h$. In Section~\ref{sec:convergence_results}, we prove that the DGD counterpart of~\eqref{eq:ergodic_average_dual_multiplier}, where the expectation is over $\Dlambda(\h; \bbtheta^\dagger)$, yields an estimate that can be made arbitrarily close to an optimal dual multiplier $\bblambda^\star(\h)$. Assuming $\Dlambda(\h; \bbphi^\star) \approx \Dlambda(\h; \bbtheta^\dagger)$ for an optimal state-augmented policy $\bbphi^\star$, \eqref{eq:ergodic_average_dual_multiplier} serves as a theoretically grounded method for near-optimal initialization.

We next introduce a supervised learning algorithm---which we call the dual variable regression or \emph{dual-regression (DR) algorithm} for short---that succeeds the state-augmented training algorithm and learns $\bblambda^\dagger(\bbH; \bbphi^\star)$ for all $\bbH \sim \Dh$. 

\begin{remark}
The computational cost of the proposed dual descent sampling is minimal. We perform dual descent roll-outs at regular checkpoints, not at every training iteration, and each roll-out evaluates constraint slacks with the state-augmented parametrization weights frozen---no backpropagation through the model is required---making the per-roll-out cost negligible relative to a primal training step. This procedure is analogous to standard model validation on held-out networks and effectively repurposes that computation to also generate informative training multipliers.
\end{remark}

\subsection{Dual-Regression Parametrization}

We define a dual-regression function $\bbd_{\bbpsi}: \ccalX \times \ccalH \times \bbPsi \to \reals^c_+$, parametrized by $\bbpsi \in \bbPsi$, where $\ccalX \subseteq \reals^{F_{\bbpsi}}$ denotes the set of regression features. For a given $\bbH$, the regression loss is given by
\begin{align} \label{eq:dual-regression-loss}
    \ccalF \big( \bbX, \bbH, \bbphi^\star; \bbpsi \big) = \big \| \bbd_{\bbpsi}\big( \bbX, \bbH; \bbpsi \big) - \bblambda^\dagger \big( \bbH; \bbphi^\star \big)  \big \|^2_p,
\end{align}
where $\|.\|_p$ denotes the $L_p$ vector norm. Specifically, we optimize a $L_{1}$ loss instead of the $L_{2}$ loss, given the sparsity of the optimal multipliers. 

In \eqref{eq:dual-regression-loss}, near-optimal multiplier estimates $\bblambda^\dagger(\bbH; \bbphi^\star)$ serve as regression targets (labels), and an \emph{optimal dual-regression parametrization $\bbpsi^\star$} minimizes the expected regression loss,
\begin{align} \label{eq:dual_regression_problem:optimal_dual_regressor}
    \bbpsi^\star \in& \argmin_{\bbpsi \in \bbPsi}\; \E_{\bbH \sim \Dh} \left[ \ccalF \big( \bbX(\h), \bbH, \bbphi^\star; \bbpsi \big) \right].
\end{align} 

At the start of the online execution of the state-augmented algorithm, for any given $\bbH$, we utilize the trained dual-regression policy $\bbpsi^\star$ to set the initial dual multipliers close to their optimal levels as
\begin{align} \label{eq:dual-multiplier-regressor-init}
    \bblambda_0(\bbH) := \bbd_{\bbpsi} \big( \bbX(\h), \bbH; \bbpsi^\star \big),
\end{align}
and run the online dual updates in~\eqref{eq:dual_dynamics} as before.

We design the regression features $\bbX(\bbH)$ to be informative of per-user constraint difficulty, and by extension, the magnitude of optimal multipliers $\bblambda^\star(\bbH)$. The specific choice is application-dependent, and we discuss our choice in Section~\ref{sec:experiments} after introducing the power control problem [cf.~\eqref{eq:power_control_problem} and~\eqref{eq:example-dual-regression-feature}].

\begin{remark}
The dual-regression policy is trained after the state-augmented policy as a standard supervised regression problem. This is substantially cheaper than state-augmented training since the dual-regression policy regresses on a single target vector per training network, and the regression targets are already available from the dual multiplier trajectories accumulated during state-augmented training at no additional cost. At inference, the dual-regression policy adds only a single forward pass at the start of execution, amortized over the entire deployment horizon.
\end{remark}

\section{Convergence and Excursion}\label{sec:convergence_results}

We present theoretical insights into the dual dynamics of the DGD algorithm underlying the state-augmented approach. Following the separability argument in Section~\ref{sec:formulation}, we work with a fixed network configuration $\bbH$ throughout this section.

\subsection{Preliminaries}

We first state necessary assumptions and definitions.

\begin{assumption}[Unbiased stochastic subgradients with bounded expected norms]\label{ass:unbiased_and_bounded_subgradients}
    For any $\bblambda$, the stochastic subgradient $\widehat{\bbs}_{\bbtheta}(\bblambda)$ is an unbiased estimate of a true subgradient, i.e., $\E[\widehat{\bbs}_{\bbtheta}(\bblambda)] = \bbs_{\bbtheta}(\bblambda) \in \partial g_{\bbtheta}(\bblambda)$. Furthermore, the second moment of the norm of the subgradients is bounded, i.e., $\E[\norm{\widehat{\bbs}_{\bbtheta}(\bblambda)}_2^2] \leq S^2$ for some $S > 0$.
\end{assumption}

\begin{assumption}[Strictly feasible policy exists.]\label{ass:strict_feasibility}
    For any $\bbH$, there exists a strictly feasible primal variable $\bbtheta^\dagger$ and a constant $\xi > 0$ such that we have $\bbf \left( \bbr \big[\bbH, \bbp_{\bbtheta}(\bbH; \bbtheta^\dagger) \big] \right) \geq \xi \mathbf{1}$.
\end{assumption}

\begin{definition}\label{def:dual-function-near-optimality-ball}
For a given $\epsilon > 0$, we define an $\epsilon$-ball of near-optimality for parametrized dual function iterates as
\begin{align} \label{eq:near-optimal-dual-function-set}
    \ccalG_{\epsilon} \coloneqq \Big\{ \bblambda \in \mathbb{R}^N_{+} \; \cond \; g_{\bbtheta}(\bblambda) - D^\star_{\bbtheta} \leq \epsilon \Big\},
\end{align}
and the $\epsilon$-ball of near-optimal dual multipliers as
\begin{align} \label{eq:near-optimal-dual-multiplier-set}
     \bbLambda_{\epsilon} \coloneqq \Big \{\bblambda \in \mathbb{R}^N_+ \; \cond \; \| \bblambda - \bblambda^\star \| \leq \epsilon \Big \},
\end{align}
where $ \| \bblambda - \bblambda^\star \|$ is understood as the distance of $\bblambda$ to the closest optimal dual multiplier, i.e., to the set $\bbLambda^\star_0$.
\end{definition}

The unbiasedness assumption may not hold exactly in practice since the dual update window $T_0$ is finite. However, for a suitable choice of $T_0$, the bias can be made sufficiently small, and the stochastic subgradients are likely to be descent directions (see~\cite[Appendix II]{calvo2021state}). The bounded second moment assumption is satisfied when the utilities $f_i$ are bounded on their domains.

The following lemma shows that dual multipliers with near-optimal dual function values are bounded.
\begin{lemma}\label{lemma:boundedness_of_D}
Given Assumption~\ref{ass:unbiased_and_bounded_subgradients} and Assumption~\ref{ass:strict_feasibility}, any dual multiplier $\bblambda \in \ccalG_{\epsilon}$ is contained in the ball
\begin{equation}
    \begin{aligned}
        \norm{\bblambda}_1 \leq \frac{D^\star_{\bbtheta} + \epsilon - f_0 \left( \bbr \left( \h, \bbp_{\bbtheta}\big(\h; \bbtheta^\dagger \big) \right) \right)}{\xi},
    \end{aligned}
\end{equation}
where $\xi$ is the feasibility gap and $\bbtheta^\dagger$ are the strictly feasible model parameters as defined in Assumption~\ref{ass:strict_feasibility}.
\end{lemma}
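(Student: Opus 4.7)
The plan is to bound $\|\bblambda\|_1$ from above by lower bounding $g_{\bbtheta}(\bblambda)$ at the strictly feasible primal variable $\bbtheta^\dagger$ guaranteed by Assumption~\ref{ass:strict_feasibility}, and then combining this lower bound with the near-optimality upper bound $g_{\bbtheta}(\bblambda)\leq D^\star_{\bbtheta} + \epsilon$ that is built into the definition of $\ccalG_{\epsilon}$.

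First, I would recall that $g_{\bbtheta}(\bblambda)=\max_{\bbtheta\in\bbTheta}\ccalL_{\bbtheta}(\bbtheta,\bblambda)$, so in particular $g_{\bbtheta}(\bblambda)\geq \ccalL_{\bbtheta}(\bbtheta^\dagger,\bblambda)$. Expanding the Lagrangian via its definition in \eqref{eq:param_lagrangian_decomposed}, this yields
\begin{equation*}
g_{\bbtheta}(\bblambda)\;\geq\; f_0\bigl(\bbr(\h,\bbp_{\bbtheta}(\h;\bbtheta^\dagger))\bigr)\;+\;\bblambda\cdot \bbf\bigl(\bbr(\h,\bbp_{\bbtheta}(\h;\bbtheta^\dagger))\bigr).
\end{equation*}
Because $\bblambda\succcurlyeq\bb0$ and the strict feasibility of Assumption~\ref{ass:strict_feasibility} gives $\bbf(\bbr(\h,\bbp_{\bbtheta}(\h;\bbtheta^\dagger)))\geq\xi\bbone$ componentwise, the inner product in the second term satisfies $\bblambda\cdot\bbf(\cdot)\geq \xi\,\bbone^\top\bblambda=\xi\|\bblambda\|_1$, the last equality holding precisely because $\bblambda$ is nonnegative.

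Second, I would invoke the hypothesis $\bblambda\in\ccalG_{\epsilon}$, which by \eqref{eq:near-optimal-dual-function-set} means $g_{\bbtheta}(\bblambda)\leq D^\star_{\bbtheta}+\epsilon$. Chaining the two bounds produces
\begin{equation*}
f_0\bigl(\bbr(\h,\bbp_{\bbtheta}(\h;\bbtheta^\dagger))\bigr)+\xi\|\bblambda\|_1 \;\leq\; D^\star_{\bbtheta}+\epsilon,
\end{equation*}
and rearranging (dividing by $\xi>0$) delivers exactly the claimed $\ell_1$-ball bound. Assumption~\ref{ass:unbiased_and_bounded_subgradients} is not strictly needed for the algebraic statement itself, but it is cited because it underwrites the existence/finiteness of $D^\star_{\bbtheta}$ in the stochastic-subgradient regime used throughout the section.

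There is no real obstacle here: the argument is essentially the classical Uzawa/Hiriart-Urruty boundedness trick for dual multipliers under a Slater-type condition, and every ingredient has been laid out in the assumptions and definitions. The only thing to be careful about is keeping the signs and orientations consistent: the constraint is written as $\bbf(\cdot)\geq\bb0$ (so that feasibility adds a nonnegative term to the Lagrangian), and $\bblambda\in\reals^N_+$, which together make the $\xi\|\bblambda\|_1$ step work without absolute values. I would also note that the right-hand side is automatically nonnegative since $f_0(\bbr(\h,\bbp_{\bbtheta}(\h;\bbtheta^\dagger)))\leq P^\star_{\bbtheta}\leq D^\star_{\bbtheta}\leq D^\star_{\bbtheta}+\epsilon$, so the bound is nonvacuous.
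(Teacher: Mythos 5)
Your proposal is correct and follows essentially the same argument as the paper's proof: lower-bound $g_{\bbtheta}(\bblambda)$ by $\ccalL_{\bbtheta}(\bbtheta^\dagger,\bblambda)$, use strict feasibility and nonnegativity of $\bblambda$ to extract the $\xi\|\bblambda\|_1$ term, and chain with the defining inequality of $\ccalG_{\epsilon}$. Your side remarks on the dispensability of Assumption~\ref{ass:unbiased_and_bounded_subgradients} and the nonvacuousness of the bound are accurate additions but not needed.
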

\begin{proof}\let\qed\relax
    See Appendix~\ref{proof:boundedness_of_D}.
\end{proof}

An immediate consequence of Lemma~\ref{lemma:boundedness_of_D} is that the set of optimal dual multipliers $\bbLambda^\star$ is bounded, and by the equivalence of norms, near-optimality in dual function value implies proximity to an optimal multiplier. We formally state these as corollaries.

\begin{corollary}\label{corollary:bounded_optimal_multipliers}
    Under Assumption~\ref{ass:unbiased_and_bounded_subgradients} and Assumption~\ref{ass:strict_feasibility}, the set of optimal dual multipliers $\bbLambda^\star_0$ is bounded.
\end{corollary}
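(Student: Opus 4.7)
The plan is to observe that Corollary~\ref{corollary:bounded_optimal_multipliers} is nothing more than the specialization of Lemma~\ref{lemma:boundedness_of_D} to $\epsilon = 0$. First I would establish the set inclusion $\bbLambda^\star_0 \subseteq \ccalG_0$. Indeed, by definition every optimal dual multiplier $\bblambda^\star \in \bbLambda^\star_0$ attains $g_\bbtheta(\bblambda^\star) = D^\star_\bbtheta$, so the defining inequality $g_\bbtheta(\bblambda) - D^\star_\bbtheta \leq 0$ of $\ccalG_0$ from Definition~\ref{def:dual-function-near-optimality-ball} is satisfied with equality; the nonnegativity $\bblambda^\star \in \reals^N_+$ is built into the dual problem~\eqref{eq:param_dual_problem}, where $g_\bbtheta$ is minimized over the nonnegative orthant.

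Next, invoking Lemma~\ref{lemma:boundedness_of_D} with $\epsilon = 0$ immediately yields the uniform estimate
\begin{equation*}
\|\bblambda^\star\|_1 \;\leq\; \frac{D^\star_\bbtheta - f_0\big(\bbr(\h, \bbp_\bbtheta(\h; \bbtheta^\dagger))\big)}{\xi}
\end{equation*}
for every $\bblambda^\star \in \bbLambda^\star_0$, where $\bbtheta^\dagger$ and $\xi > 0$ are supplied by Assumption~\ref{ass:strict_feasibility}. The right-hand side is a fixed finite constant independent of the particular optimal multiplier: the finiteness of $D^\star_\bbtheta$ follows from the trivial upper bound $D^\star_\bbtheta \leq g_\bbtheta(\bb0) = \max_\bbtheta f_0\big(\bbr(\h, \bbp_\bbtheta(\h; \bbtheta))\big)$ (which is bounded under the utility boundedness already implicit in Lemma~\ref{lemma:boundedness_of_D}), while $\xi > 0$ by assumption. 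This exhibits $\bbLambda^\star_0$ as a subset of a bounded $\ell_1$-ball in $\reals^N_+$, hence bounded in any equivalent norm.

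There is no real obstacle to overcome: all the substantive work has already been done in Lemma~\ref{lemma:boundedness_of_D}, and the corollary merely observes that optimal dual multipliers lie inside $\ccalG_\epsilon$ for every $\epsilon \geq 0$, in particular for $\epsilon = 0$, so the $\epsilon$-uniform bound of the lemma applies directly.
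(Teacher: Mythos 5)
Your proof is correct and follows exactly the route the paper intends: the paper presents this corollary as an immediate implication of Lemma~\ref{lemma:boundedness_of_D}, and your argument simply makes that explicit by noting $\bbLambda^\star_0 \subseteq \ccalG_{\epsilon}$ for $\epsilon = 0$ and reading off the uniform $\ell_1$ bound. No gaps.
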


\begin{corollary}\label{corollary:near_optimal_dual_multipliers}
   Under Assumptions~\ref{ass:unbiased_and_bounded_subgradients} and~\ref{ass:strict_feasibility}, for any given $\epsilon > 0$, there exists a $B_{\epsilon}$-ball of near-optimal dual multipliers that contain $\ccalG_{\epsilon}$, i.e., $\ccalG_{\epsilon} \subseteq  \bbLambda_{B_{\epsilon}}$.
\end{corollary}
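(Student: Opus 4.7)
The plan is to derive Corollary~\ref{corollary:near_optimal_dual_multipliers} as an essentially direct consequence of Lemma~\ref{lemma:boundedness_of_D}, applied both to a generic $\bblambda \in \ccalG_{\epsilon}$ and to an arbitrary optimal multiplier $\bblambda^\star \in \bbLambda^\star_0$. The key observation is that Lemma~\ref{lemma:boundedness_of_D} furnishes a uniform $\ell_1$-ball containing $\ccalG_{\epsilon}$, and since $\bbLambda^\star_0 \subseteq \ccalG_0 \subseteq \ccalG_{\epsilon}$ (because $g_{\bbtheta}(\bblambda^\star) = D^\star_{\bbtheta}$ for every $\bblambda^\star \in \bbLambda^\star_0$), the same bound controls any optimal multiplier as well. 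Corollary~\ref{corollary:bounded_optimal_multipliers} is then recovered en route, and it ensures $\bbLambda^\star_0$ is nonempty and bounded, so that the distance $\|\bblambda - \bblambda^\star\|$ in \eqref{eq:near-optimal-dual-multiplier-set} is well-defined.

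First, I would fix an arbitrary $\bblambda \in \ccalG_{\epsilon}$ and an arbitrary $\bblambda^\star \in \bbLambda^\star_0$. Applying Lemma~\ref{lemma:boundedness_of_D} once with tolerance $\epsilon$ gives $\|\bblambda\|_1 \leq \bigl(D^\star_{\bbtheta} + \epsilon - f_0(\bbr(\h, \bbp_{\bbtheta}(\h; \bbtheta^\dagger)))\bigr)/\xi$, and applying it again with tolerance $0$ to $\bblambda^\star$ (valid because $g_{\bbtheta}(\bblambda^\star) = D^\star_{\bbtheta}$) gives $\|\bblambda^\star\|_1 \leq \bigl(D^\star_{\bbtheta} - f_0(\bbr(\h, \bbp_{\bbtheta}(\h; \bbtheta^\dagger)))\bigr)/\xi$. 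Invoking the triangle inequality together with the norm equivalence $\|\cdot\|_2 \leq \|\cdot\|_1$ on $\mathbb{R}^N$ then yields
\begin{align*}
\| \bblambda - \bblambda^\star \|_2 \;\leq\; \|\bblambda\|_1 + \|\bblambda^\star\|_1 \;\leq\; \frac{2 D^\star_{\bbtheta} + \epsilon - 2 f_0\bigl(\bbr(\h, \bbp_{\bbtheta}(\h; \bbtheta^\dagger))\bigr)}{\xi} \;=: \; B_{\epsilon}.
\end{align*}

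Next, I would conclude by taking the infimum over $\bblambda^\star \in \bbLambda^\star_0$ on the left-hand side, which leaves the right-hand side unchanged and gives $\inf_{\bblambda^\star \in \bbLambda^\star_0} \|\bblambda - \bblambda^\star\|_2 \leq B_{\epsilon}$. Per Definition~\ref{def:dual-function-near-optimality-ball}, this is precisely the statement $\bblambda \in \bbLambda_{B_{\epsilon}}$. Since $\bblambda \in \ccalG_{\epsilon}$ was arbitrary, the inclusion $\ccalG_{\epsilon} \subseteq \bbLambda_{B_{\epsilon}}$ follows, and the bound $B_{\epsilon}$ depends only on $\epsilon$, the feasibility margin $\xi$, and the value of the objective at the Slater-type point from Assumption~\ref{ass:strict_feasibility}, completing the proof.

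There is no substantive obstacle here; the entire argument is a bookkeeping consequence of Lemma~\ref{lemma:boundedness_of_D} combined with the triangle inequality. The only subtlety worth flagging in writing is the legitimacy of applying the lemma at $\epsilon = 0$ to any $\bblambda^\star$, which uses $\bbLambda^\star_0 \subseteq \ccalG_0$, and ensuring one explicitly notes Corollary~\ref{corollary:bounded_optimal_multipliers} so that the set-to-set distance in \eqref{eq:near-optimal-dual-multiplier-set} is meaningful. A remark that $B_{\epsilon}$ grows only linearly in $\epsilon$ (so near-optimality in dual value translates directly to proximity in the multiplier space) would fit naturally here and motivates the excursion analysis in the remainder of Section~\ref{sec:convergence_results}.
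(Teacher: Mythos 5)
Your proof is correct and follows essentially the same route the paper intends: the paper states this corollary as an immediate consequence of Lemma~\ref{lemma:boundedness_of_D} together with the equivalence of norms and the boundedness of $\bbLambda^\star_0$ (Corollary~\ref{corollary:bounded_optimal_multipliers}), which is exactly what you make explicit via the triangle inequality applied to the two $\ell_1$-bounds. Your explicit constant $B_{\epsilon}$ and the observation that $\bbLambda^\star_0 \subseteq \ccalG_0 \subseteq \ccalG_{\epsilon}$ are a faithful, slightly more detailed write-up of the argument the paper leaves implicit.
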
 %

The next lemma establishes that the dual dynamics perform stochastic subgradient descent on the dual function.
\begin{lemma}\label{lemma:DGA_expected_lambda_norm_squared}
For any $k \geq 0$, the dual iterates of the stochastic DGD algorithm satisfy
    \begin{align}
        &\E \left[ \norm{\bblambda_{k+1} - \bblambda^\star}^2 \cond \bblambda_k \right] \nonumber \\
        &\leq \norm{\bblambda_k - \bblambda^\star}^2 +\eta_{\bblambda}^2 S^2 - 2\eta_{\bblambda} \left[ g_{\bbtheta}(\bblambda_k) - D^\star_{\bbtheta} \right].
    \end{align}
\end{lemma}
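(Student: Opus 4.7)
The plan is to follow the standard template for analyzing projected stochastic subgradient descent on a convex function, adapted here to the dual function $g_{\bbtheta}$ with stochastic subgradient $\widehat{\bbs}_{\bbtheta}$. The starting point is the dual update rule $\bblambda_{k+1} = [\bblambda_k - \eta_{\bblambda}\widehat{\bbs}_{\bbtheta}(\bblambda_k)]_+$ from \eqref{eq:DGA:dual} (in its stochastic version). Since $\bblambda^\star \succcurlyeq \bb0$, we have $[\bblambda^\star]_+ = \bblambda^\star$, so the non-expansiveness of the Euclidean projection onto $\reals^N_+$ yields
\begin{align*}
\|\bblambda_{k+1} - \bblambda^\star\|^2 \leq \|\bblambda_k - \eta_{\bblambda}\widehat{\bbs}_{\bbtheta}(\bblambda_k) - \bblambda^\star\|^2.
\end{align*}

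Next I would expand the squared norm on the right-hand side, producing the three familiar terms $\|\bblambda_k - \bblambda^\star\|^2$, a cross term $-2\eta_{\bblambda}\widehat{\bbs}_{\bbtheta}(\bblambda_k)^\top(\bblambda_k-\bblambda^\star)$, and the quadratic term $\eta_{\bblambda}^2\|\widehat{\bbs}_{\bbtheta}(\bblambda_k)\|^2$. Taking conditional expectation given $\bblambda_k$ and invoking the two parts of Assumption~\ref{ass:unbiased_and_bounded_subgradients} collapses the last two terms: the cross term becomes $-2\eta_{\bblambda}\bbs_{\bbtheta}(\bblambda_k)^\top(\bblambda_k-\bblambda^\star)$ with $\bbs_{\bbtheta}(\bblambda_k)\in\partial g_{\bbtheta}(\bblambda_k)$, and the quadratic term is bounded above by $\eta_{\bblambda}^2 S^2$.

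Finally, I would invoke the convexity of $g_{\bbtheta}$ (established in the paragraph following \eqref{eq:param_dual_problem} as the pointwise supremum of affine functions) together with the subgradient inequality evaluated between $\bblambda_k$ and $\bblambda^\star$:
\begin{align*}
\bbs_{\bbtheta}(\bblambda_k)^\top(\bblambda_k - \bblambda^\star) \geq g_{\bbtheta}(\bblambda_k) - g_{\bbtheta}(\bblambda^\star) = g_{\bbtheta}(\bblambda_k) - D^\star_{\bbtheta}.
\end{align*}
Substituting this lower bound into the cross term yields the claimed inequality. I do not foresee a genuine obstacle here: the only non-trivial ingredient is remembering that non-expansiveness of projection holds with respect to any fixed point in the feasible set $\reals^N_+$, which is exactly why $\bblambda^\star$ can be placed inside the square on the right-hand side without a projection of its own. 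The result will then be used downstream (presumably summed telescopically or combined with Lemma~\ref{lemma:boundedness_of_D}) to prove the convergence and excursion statements advertised earlier.
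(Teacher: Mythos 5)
Your proposal is correct and follows essentially the same route as the paper's proof: non-expansiveness of the projection onto $\reals^N_+$, expansion of the squared norm, conditional expectation with Assumption~\ref{ass:unbiased_and_bounded_subgradients} to handle the cross and quadratic terms, and the convex subgradient inequality to replace $\bbs_{\bbtheta}(\bblambda_k)^\top(\bblambda_k - \bblambda^\star)$ by $g_{\bbtheta}(\bblambda_k) - D^\star_{\bbtheta}$. No gaps.
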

\begin{proof}\let\qed\relax
    See Appendix~\ref{proof:DGA_expected_lambda_norm_squared}.
\end{proof}

Lemma~\ref{lemma:DGA_expected_lambda_norm_squared} implies that whenever the dual function optimality gaps $g_{\bbtheta}(\bblambda_k) - D^\star_{\bbtheta}$ are larger than $\eta_{\bblambda} \frac{S^2}{2}$, dual dynamics tend to decrease the optimality gap in expectation, and vise versa. We establish in Proposition~\ref{proposition:dual_function_best_convergence} and Theorem~\ref{thm:excursion_bound} that stochastic dual dynamics drive the dual function iterates to regularly visit a near-optimality neighborhood $\displaystyle \ccalG_{\eta_{\bblambda} \frac{S^2}{2}}$, and tend not to drift far between consecutive visits, respectively.

\subsection{Ergodic Complementary Slackness and Near-Optimality}

We characterize the near-optimality of the time-averaged dual iterates that serve as estimates of optimal multipliers.

\begin{proposition}\label{prop:convergence_dual_multiplier_averages}
    Given a network configuration $\bbH$ and initial dual multiplier $\bblambda_0$, let $\{\bblambda_k\}_{k \geq 0}$ be the sequence of dual multipliers generated by the stochastic DGD algorithm in~\eqref{eq:DGA} and define $\bar{\bblambda}_K \coloneqq (1/K)\sum_{k=0}^{K-1} \bblambda_k$ as the time/iteration-average of the dual multipliers up to time/iteration $K$ and define expected dual multiplier averaged over $K$ iterations as
    \begin{align}
        \bblambda^\dagger_{K} \coloneqq \E_{\bblambda_k \sim \Dlambda(\h; \bbtheta^\dagger)} \left[ \bar{\bblambda}_K \cond \bblambda_0 \right],
    \end{align}
    where the expectation is over the trajectories generated by the DGD algorithm [cf.~\eqref{eq:ergodic_average_dual_multiplier}]. Then under Assumption~\ref{ass:unbiased_and_bounded_subgradients} and Assumption~\ref{ass:strict_feasibility}, the dual function evaluated at $\bblambda^\dagger_{K}$ satisfies
    \begin{equation}
        \begin{aligned} \label{eq:ecs_dual_function_averages}
        g_{\bbtheta} \big( \bblambda^\dagger_{K} \big) -  
     D^\star_{\bbtheta}\leq \eta_{\bblambda} \frac{S^2}{2} + \frac{\norm{\bblambda_0 - \bblambda^\star}^2}{2\eta_{\bblambda}K},
        \end{aligned}
    \end{equation}
    and there exists some positive constant $B_{\eta_{\bblambda} \frac{S^2}{2}}$ such that
    \begin{equation}
        \begin{aligned} \label{eq:ecs_dual_multiplier_averages}
            \big \| \bblambda^\dagger_{K} - \bblambda^\star \big \| \leq B_{ \eta_{\bblambda} \frac{S^2}{2}} \left( 1 + \frac{\norm{\bblambda_0 - \bblambda^\star}^2}{K \eta_{\bblambda}^2 S^2} \right).
        \end{aligned}
    \end{equation}
     
\end{proposition}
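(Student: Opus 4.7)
The proof decomposes naturally into the two displayed inequalities, and both rest on Lemma~\ref{lemma:DGA_expected_lambda_norm_squared} together with the boundedness results of Lemma~\ref{lemma:boundedness_of_D} and Corollary~\ref{corollary:near_optimal_dual_multipliers}. My plan for the first bound \eqref{eq:ecs_dual_function_averages} is to take total expectations in the per-step inequality of Lemma~\ref{lemma:DGA_expected_lambda_norm_squared}, telescope from $k=0$ to $K-1$, and drop the nonnegative term $\E[\|\bblambda_K-\bblambda^\star\|^2]$ on the left, which yields
\begin{equation*}
\frac{1}{K}\sum_{k=0}^{K-1}\E\!\left[g_{\bbtheta}(\bblambda_k)-D^\star_{\bbtheta}\right]\;\leq\;\frac{\|\bblambda_0-\bblambda^\star\|^2}{2\eta_{\bblambda}K}+\eta_{\bblambda}\frac{S^2}{2}.
\end{equation*}
I then chain two applications of Jensen's inequality: the convexity of $g_{\bbtheta}$ (established in the paper) gives $g_{\bbtheta}(\bar{\bblambda}_K)\leq \tfrac{1}{K}\sum_{k}g_{\bbtheta}(\bblambda_k)$, and taking expectations followed by Jensen on the outer convex function produces $g_{\bbtheta}(\bblambda^\dagger_K)=g_{\bbtheta}(\E[\bar{\bblambda}_K])\leq \E[g_{\bbtheta}(\bar{\bblambda}_K)]$. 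Combining these yields \eqref{eq:ecs_dual_function_averages}.

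For the second bound \eqref{eq:ecs_dual_multiplier_averages}, the plan is to interpret \eqref{eq:ecs_dual_function_averages} as saying $\bblambda^\dagger_K\in\ccalG_{\epsilon_K}$ with
\begin{equation*}
\epsilon_K \;=\; \eta_{\bblambda}\frac{S^2}{2}+\frac{\|\bblambda_0-\bblambda^\star\|^2}{2\eta_{\bblambda}K},
\end{equation*}
and then invoke Corollary~\ref{corollary:near_optimal_dual_multipliers} to conclude $\|\bblambda^\dagger_K-\bblambda^\star\|\leq B_{\epsilon_K}$. To match the specific multiplicative form in the statement, I will use the explicit bound from Lemma~\ref{lemma:boundedness_of_D}, which together with the triangle inequality and $\bblambda^\star\in\ccalG_0$ shows that the radius function inherits an affine-in-$\epsilon$ upper envelope of the form $B_\epsilon\leq a+b\epsilon$ for constants $a,b$ depending only on $\xi$, $D^\star_{\bbtheta}$, and $f_0(\bbr(\h,\bbp_{\bbtheta}(\h;\bbtheta^\dagger)))$. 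Splitting $\epsilon_K$ into its asymptotic piece $\eta_{\bblambda}S^2/2$ and its transient piece $\|\bblambda_0-\bblambda^\star\|^2/(2\eta_{\bblambda}K)$ and absorbing the former into $B_{\eta_{\bblambda}S^2/2}$ yields the factored expression on the right-hand side of \eqref{eq:ecs_dual_multiplier_averages}, with the prefactor $B_{\eta_{\bblambda}S^2/2}\cdot\|\bblambda_0-\bblambda^\star\|^2/(K\eta_{\bblambda}^2 S^2)$ obtained by a routine algebraic rewrite of $b\cdot\|\bblambda_0-\bblambda^\star\|^2/(2\eta_{\bblambda}K)$.

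The only nontrivial step is this last algebraic reconciliation: Corollary~\ref{corollary:near_optimal_dual_multipliers} is purely existential in $B_\epsilon$, so I have to open up Lemma~\ref{lemma:boundedness_of_D} to extract its explicit linear-in-$\epsilon$ dependence and then verify that the constants $a,b$ can be grouped so as to produce the clean multiplicative factor $1+\|\bblambda_0-\bblambda^\star\|^2/(K\eta_{\bblambda}^2 S^2)$ in front of $B_{\eta_{\bblambda}S^2/2}$. Everything else — the telescoping, the two Jensen applications, and the non-expansiveness of the projection already used inside Lemma~\ref{lemma:DGA_expected_lambda_norm_squared} — is standard stochastic subgradient bookkeeping.
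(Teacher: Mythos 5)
Your proof is correct and follows essentially the same route as the paper's: telescope the per-step inequality of Lemma~\ref{lemma:DGA_expected_lambda_norm_squared}, apply Jensen twice to pass from the averaged optimality gaps to $g_{\bbtheta}\big(\bblambda^\dagger_K\big)$, and convert the resulting dual-function bound into a multiplier bound via Lemma~\ref{lemma:boundedness_of_D} and Corollary~\ref{corollary:near_optimal_dual_multipliers}. The only cosmetic difference is that the paper packages the telescoped sum as an intermediate ``finite-time ergodic complementary slackness'' bound on $\E\big[\tfrac{1}{K}\sum_{k}\widehat{\bbs}_{\bbtheta}(\bblambda_k)^\top(\bblambda_k-\bblambda^\star)\big]$ before invoking the subgradient inequality, and it leaves the final affine-in-$\epsilon$ regrouping of $B_{\epsilon_K}$ entirely implicit --- the step you correctly single out as the one needing explicit verification.
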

\begin{proof}\let\qed\relax
    See Appendix~\ref{proof:convergence_dual_multiplier_averages}.
\end{proof}

We gain important insights into the finite-time behavior of the dual dynamics through Proposition~\ref{prop:convergence_dual_multiplier_averages}. Since the optimality gaps are proportional to the dual step size $\eta_{\bblambda}$, we should ideally choose a small step size. However, if the distance of the initial multiplier to the set of optimal multipliers $\norm{\bblambda_0 - \bblambda^\star}$ is large and $\eta_{\bblambda}$ is small, then the second term in the near-optimality bound decays relatively slowly, and the DGD algorithm generates suboptimal decisions for a longer period at the beginning of inference. This presents a trade-off between finite-time optimality and feasibility while also offering an opportunity for significant improvement by initializing the dual multipliers close to their optimal levels. 

By taking the liminf of both sides in \eqref{eq:ecs_dual_multiplier_averages}, it follows that
\begin{equation}
    \begin{aligned}
        \liminf_{K \to \infty}\; \big \|  \bblambda^\dagger_K - \bblambda^\star \big \| \leq B_{\eta_{\bblambda} \frac{S^2}{2}}.
    \end{aligned}
\end{equation}
Assuming the limit $\bblambda^\dagger = \lim_{K \to \infty} \bblambda^\dagger_K$ exists, the expected time-averaged dual multiplier $\bblambda^\dagger$ satisfies $|| \bblambda^\dagger - \bblambda^\star || \leq B_{\eta_{\bblambda} \frac{S^2}{2}}$, or equivalently, $\bblambda^\dagger \in \bbLambda_{B_{\eta_{\bblambda} \frac{S^2}{2}}}$.
Similarly, the expected time average of the dual function increments converges to $\ccalG_{\eta_{\bblambda} \frac{S^2}{2}}$ in the limit, i.e.,
\begin{equation}
    \begin{aligned}
        \liminf_{K \to \infty}\;  g_{\bbtheta} \big( \bblambda^\dagger_{K} \big) - D^\star_{\bbtheta} \leq \eta_{\bblambda} \frac{S^2}{2}.
    \end{aligned}
\end{equation}
To restate, both the expected time-averaged multiplier and the dual function converge to near-optimal neighborhoods in the limit. However, these convergence claims hold in expectation and do not rule out arbitrarily bad realizations of dual descent trajectories. The convergence and excursion results stated next show that this is fortunately not a concern. For almost every realization, the dual function iterates (resp. dual multiplier iterates) enter a small neighborhood of $D^\star_{\bbtheta}$ (resp. $\bbLambda^\star_0$) infinitely often and deviations therefrom are highly unlikely.

\subsection{Almost-sure Convergence of Best Iterates}

Before stating the stochastic convergence result, it is instructive to analyze the deterministic case. Assume access to an oracle that replaces the stochastic subgradients with their unbiased expectations. By an appeal to recursion, we have
 \begin{align} \label{eq:dual_function_best_convergence_with_oracle}
        \min_{k=0, \ldots, K} \big[ g_{\bbtheta}(\bblambda_k) - D^\star_{\bbtheta} \big]
        \leq \frac{\norm{\bblambda^\star - \bblambda_0}^2 + K\eta_{\bblambda}^2 S^2}{2\eta_{\bblambda} K}.
\end{align}
The convergence of the best dual function optimality gap follows since the right hand side of~\eqref{eq:dual_function_best_convergence_with_oracle} approaches $\eta_{\bblambda} \frac{S^2}{2}$ as $K \to \infty$. Since the stochastic subgradients $\widehat{\bbs}(\bblambda)$ oscillate around a true subgradient, it is reasonable to expect a similar convergence behavior in the stochastic case. The following proposition from~\cite[Theorem~1]{ribeiro2010ergodic} confirms this intuition by leveraging ideas from supermartingale theory.

\begin{proposition}\label{proposition:dual_function_best_convergence}
    Consider the DGD algorithm given in~\eqref{eq:DGA}. Given any $\bblambda_0$ and that Assumption~\ref{ass:unbiased_and_bounded_subgradients} holds, the best dual function iterate $\dbest{K} \coloneqq \min_{k = 0, \ldots, K} \big[ g_{\bbtheta}(\bblambda_k) \big]$ satisfies
    \begin{align}
        \liminf_{K \to \infty}\; \dbest{K} - D^\star_{\bbtheta} \leq \eta_{\bblambda} \frac{S^2}{2} \quad \text{a.s.} 
    \end{align}
\end{proposition}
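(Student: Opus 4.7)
The plan is to combine Lemma~\ref{lemma:DGA_expected_lambda_norm_squared} with a stopping-time argument and the non-negative supermartingale convergence theorem, in the spirit of the reference \cite{ribeiro2010ergodic} cited just before the statement. Since $\dbest{K}$ is monotone non-increasing in $K$, the liminf in the claim equals $\lim_{K\to\infty}\dbest{K}$, so it suffices to show that for every $\epsilon>0$, the dual function iterate enters the set $\{g_{\bbtheta}(\bblambda)\leq D^\star_{\bbtheta}+\eta_{\bblambda} S^2/2+\epsilon\}$ at some a.s.\ finite time.

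Concretely, I would fix $\epsilon>0$, let $\ccalF_k$ be the natural filtration generated by the dual iterates, and define the first-hitting time $T_\epsilon:=\inf\{k\geq 0:g_{\bbtheta}(\bblambda_k)-D^\star_{\bbtheta}\leq \eta_{\bblambda} S^2/2+\epsilon\}$. Writing $V_k:=\|\bblambda_k-\bblambda^\star\|^2$, Lemma~\ref{lemma:DGA_expected_lambda_norm_squared} gives on the event $\{T_\epsilon>k\}$ that
\begin{equation*}
\E[V_{k+1}\mid\ccalF_k]\leq V_k+\eta_{\bblambda}^2 S^2-2\eta_{\bblambda}\bigl[g_{\bbtheta}(\bblambda_k)-D^\star_{\bbtheta}\bigr]\leq V_k-2\eta_{\bblambda}\epsilon,
\end{equation*}
where the last inequality uses $g_{\bbtheta}(\bblambda_k)-D^\star_{\bbtheta}>\eta_{\bblambda} S^2/2+\epsilon$ strictly before $T_\epsilon$. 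This is the mechanism by which sustained suboptimality drives $V_k$ down by a constant amount in expectation, and the rest of the argument turns this into a contradiction with $V_k\geq 0$.

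Next I would construct the auxiliary process $W_k:=V_{k\wedge T_\epsilon}+2\eta_{\bblambda}\epsilon(k\wedge T_\epsilon)$ and verify the supermartingale property $\E[W_{k+1}\mid\ccalF_k]\leq W_k$. On $\{T_\epsilon>k\}$, the displayed inequality yields $\E[V_{k+1}\mid\ccalF_k]+2\eta_{\bblambda}\epsilon(k+1)\leq V_k+2\eta_{\bblambda}\epsilon k$; on $\{T_\epsilon\leq k\}$ the process is frozen, so $\E[W_{k+1}\mid\ccalF_k]=W_k$ trivially. Since $W_k\geq 0$, the non-negative supermartingale convergence theorem implies that $W_k$ converges a.s.\ to a finite limit. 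However, on the event $\{T_\epsilon=\infty\}$ one has $W_k=V_k+2\eta_{\bblambda}\epsilon k\geq 2\eta_{\bblambda}\epsilon k\to\infty$, contradicting finiteness of the limit. Hence $\P(T_\epsilon<\infty)=1$.

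Finally, by the definition of $T_\epsilon$ and the monotonicity of the best iterate, $\dbest{K}-D^\star_{\bbtheta}\leq \eta_{\bblambda} S^2/2+\epsilon$ for every $K\geq T_\epsilon$, so $\liminf_{K\to\infty}\dbest{K}-D^\star_{\bbtheta}\leq \eta_{\bblambda} S^2/2+\epsilon$ almost surely; letting $\epsilon\downarrow 0$ along a countable sequence gives the claim. The main obstacle, in my view, is the careful bookkeeping around the stopping time: one must split conditioning into $\{T_\epsilon>k\}$ and $\{T_\epsilon\leq k\}$ so that Lemma~\ref{lemma:DGA_expected_lambda_norm_squared} is only invoked where the strict inequality $g_{\bbtheta}(\bblambda_k)-D^\star_{\bbtheta}>\eta_{\bblambda} S^2/2+\epsilon$ holds, and simultaneously ensure that the added drift term $2\eta_{\bblambda}\epsilon(k\wedge T_\epsilon)$ exactly absorbs the $\eta_{\bblambda}^2 S^2$ residual to produce a supermartingale rather than merely a bounded process; Assumption~\ref{ass:unbiased_and_bounded_subgradients} is what makes this absorption possible through Lemma~\ref{lemma:DGA_expected_lambda_norm_squared}.
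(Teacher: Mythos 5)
Your proof is correct and follows essentially the same route as the paper's: both arguments freeze the process $\|\bblambda_k-\bblambda^\star\|^2$ once the dual function iterate first enters the near-optimality neighborhood and then use the drift inequality of Lemma~\ref{lemma:DGA_expected_lambda_norm_squared} together with a supermartingale convergence theorem to show the hitting time is a.s.\ finite. The only difference is cosmetic: you add the linear compensator $2\eta_{\bblambda}\epsilon(k\wedge T_\epsilon)$ and invoke plain non-negative supermartingale convergence plus an $\epsilon\downarrow 0$ limit, whereas the paper weights by an indicator and applies the Robbins--Siegmund lemma to conclude summability of the increments directly.
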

\begin{proof}
    See Appendix~\ref{proof:dual_function_best_convergence}.
\end{proof}

Proposition~\ref{proposition:dual_function_best_convergence} asserts that for almost all realizations and arbitrary $\delta > 0$, $g_{\bbtheta}(\bblambda_k) \leq D^\star_{\bbtheta} + \eta_{\bblambda} \frac{S^2}{2} - \delta$ occurs eventually for some $k$. Moreover, by a time-shift argument, $g_{\bbtheta}(\bblambda_k) \leq D^\star_{\bbtheta} + \eta_{\bblambda} \frac{S^2}{2} - \delta$ happens infinitely many times. By Corollary~\ref{corollary:near_optimal_dual_multipliers}, an analogous statement holds for the dual iterates, i.e., 
\begin{align}\label{corollary:dual_iterate_best_convergence}
    \liminf_{K \to \infty}\; \lambdabest{K} \leq B_{\eta_{\bblambda} \frac{S^2}{2}} \quad \text{a.s.},
\end{align}
where $\lambdabest{K} \coloneqq \min_{k = 0, \ldots, K}  \big \| \bblambda_k - \bblambda^\star \big \| $.

\subsection{Excursions of Dual Function Iterates}
Proposition~\ref{proposition:dual_function_best_convergence} guarantees that the dual function iterates visit a near-optimality neighborhood, but does not characterize the behavior between two consecutive visits. We refer to deviations from this neighborhood as excursions and derive probability bounds on their magnitude. 

Before stating the main theorem, we formally define an excursion and state additional assumptions.

\begin{definition}[Excursion of optimality gaps]
    \label{definition:excursion}
    For $k \geq 0$, let $G_k \coloneqq g_{\bbtheta}(\bblambda_k) - D^\star_{\bbtheta}$ be the dual function optimality gaps as before and define a stopping time $L \coloneqq \min \{l>0 \, \vert \, G_{k_0 + l} < G_{k_0} \}$. We define the sequence $\{G_{k_0}, \ldots, G_{k_0 + L} \}$ as an excursion of the dual function optimality gaps and $G^{\dagger}_{k_0} = \max_{0 \leq l \leq L} G_{k_0 + l}$ as the worst excursion gap.
\end{definition}

\begin{assumption}[Local Lipschitz-smoothness and strong-convexity] \label{ass:lipschitz_smooth_strongly_convex_subregion}
    There exist constants $\lambda_{\max}, G_{\max}, m, L > 0$ and an excursion subregion $\ccalE$ such that $\| \bblambda^\star \| \leq \lambda_{\max}$ for any $\bblambda^\star$ and the dual function is $L$-Lipschitz smooth on $\ccalE$, i.e., 
    \begin{align} \label{eq:lipschitz-smooth}
        \hspace{-.3em}g_{\bbtheta}(\bblambda_2) \!-\! g_{\bbtheta}(\bblambda_1) &\leq  
        \bbs_{\bbtheta} (\bblambda_1)^\top (\bblambda_2 - \bblambda_1) \!+\! \frac{L}{2} \| \bblambda_2 - \bblambda_1 \|^2,
    \end{align}
    $\forall \bblambda_1, \bblambda_2 \in \ccalE$ and $m$-strongly convex on $\ccalE$, i.e.,
    \begin{align} \label{eq:strong-convexity}
        \hspace{-.5em}g_{\bbtheta}(\bblambda_2) \!-\! g_{\bbtheta}(\bblambda_1) &\geq  \bbs_{\bbtheta} (\bblambda_1)^\top (\bblambda_2 - \bblambda_1) \!+\! \frac{m}{2} \| \bblambda_2 - \bblambda_1 \|^2,
    \end{align}
    $\forall \bblambda_1, \bblambda_2 \in \ccalE$, where the excursion subregion $\ccalE$ is defined as
    \begin{align} \label{eq:nice-excursion-subregion}
        \ccalE \coloneqq \big \{  \bblambda \in \reals^c_+ \cond G_{\min} \leq g_{\bbtheta}(\bblambda) - D^\star_{\bbtheta} \leq G_{\max} \big \},
    \end{align}
    where $G_{\min} \coloneqq \eta_{\bblambda} \frac{S^2}{2} \left( \frac{L}{2m} \right)$ denotes the inner excursion boundary and we note that it is a constant multiple of $\eta_{\bblambda} \frac{S^2}{2}$. 
\end{assumption}

Assumption~\ref{ass:lipschitz_smooth_strongly_convex_subregion} requires local regularity of the dual function within the excursion subregion $\ccalE$ so that subgradients grow sufficiently fast with the optimality gap. This is considerably milder than assuming global strong convexity and Lipschitz smoothness. Next, we state the main excursion theorem.

\begin{theorem}[Excursion Bound for Dual Function Optimality Gaps]\label{thm:excursion_bound}
Consider a sequence of dual function optimality gaps $\{ G_k \}_{k \geq 0}$ with $G_0 = (1 + \rho) G_{\min}$ for any $\rho > 0$, and suppose Assumptions~\ref{ass:unbiased_and_bounded_subgradients}-\ref{ass:lipschitz_smooth_strongly_convex_subregion} hold. Then the probability of $\gamma$-excursions of the dual function optimality gaps is upper bounded by
\begin{align}
     \mathbb{P} \big[  G^{\dagger}_{0} \geq \gamma \; \vert \; G_0 \big] &\leq \frac{ e^{-\beta \big( \min \big \{  \gamma, G_{\max}  \big \} - G_0 \big) } }{ 1 + \min \big \{  0,  \gamma - G_{\max} - 1 \big \} },
 \end{align}  
 with $\beta \coloneqq \frac{\rho L}{S^2  \left( 1 + \rho^2 \eta_{\bblambda}^2 \frac{L^2}{4} \right) } > 0$.
\end{theorem}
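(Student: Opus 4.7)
The plan is to split the bound into two regimes depending on whether $\gamma$ lies inside or outside the subregion $\ccalE$. Within $\ccalE$ I will construct an exponential supermartingale $W_k = e^{\beta G_k}$ with the particular $\beta$ in the statement and apply Doob's maximal inequality to obtain the exponentially decaying factor $e^{-\beta(\min\{\gamma,G_{\max}\}-G_0)}$. Once the process escapes $\ccalE$ through its upper boundary $G_{\max}$, I will appeal to the nonnegative supermartingale property of $\|\bblambda_k - \bblambda^\star\|^2$ established in Lemma~\ref{lemma:DGA_expected_lambda_norm_squared} to produce the linear factor in the denominator via a Markov-type inequality, and combine the two estimates by the strong Markov property at the first exit time from $\ccalE$.

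For the exponential step I start from the local Lipschitz smoothness in Assumption~\ref{ass:lipschitz_smooth_strongly_convex_subregion},
\begin{align*}
G_{k+1} - G_k \leq -\eta_{\bblambda}\,\bbs_{\bbtheta}(\bblambda_k)^{\top} \widehat{\bbs}_{\bbtheta}(\bblambda_k) + \tfrac{L \eta_{\bblambda}^2}{2}\,\|\widehat{\bbs}_{\bbtheta}(\bblambda_k)\|^2,
\end{align*}
which after taking conditional expectations and using the unbiasedness in Assumption~\ref{ass:unbiased_and_bounded_subgradients} yields the drift bound $\mathbb{E}[G_{k+1}-G_k \mid \ccalF_k] \leq -\eta_{\bblambda}\|\bbs_{\bbtheta}(\bblambda_k)\|^2 + L\eta_{\bblambda}^2 S^2/2$. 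The Polyak--\L{}ojasiewicz inequality implied by $m$-strong convexity on $\ccalE$ gives $\|\bbs_{\bbtheta}(\bblambda_k)\|^2 \geq 2m G_k$, so at $G_k \geq (1+\rho) G_{\min}$ the drift is strictly negative and proportional to $\rho$. Feeding this into a Taylor-type bound of the form $e^{x} \leq 1 + x + \tfrac{x^2}{2}\psi$ applied to $\mathbb{E}[e^{\beta(G_{k+1}-G_k)} \mid \ccalF_k]$, the linear term contributes this drift while the quadratic term contributes a noise correction proportional to $\beta^2 \eta_{\bblambda}^2 L^2 S^2/4$. Requiring $\mathbb{E}[W_{k+1} \mid \ccalF_k] \leq W_k$ reduces to a scalar inequality in $\beta$ whose largest admissible solution is exactly $\beta = \rho L / [S^2(1 + \rho^2 \eta_{\bblambda}^2 L^2 / 4)]$. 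Doob's maximal inequality applied to the stopped process $W_{k \wedge \tau}$, where $\tau$ is the first return time below $G_0$ or the first exit above $\min\{\gamma, G_{\max}\}$, then yields the claimed exponential factor.

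For the linear factor I use the strong Markov property at the first hitting time $\sigma$ of $G_{\max}$. When $\gamma \leq G_{\max}$ the event $\{G^{\dagger}_{0} \geq \gamma\}$ is already captured by the exponential step above. When $\gamma > G_{\max}$, the exponential step bounds $\mathbb{P}[\sigma \leq L \mid G_0] \leq e^{-\beta(G_{\max} - G_0)}$, and conditionally on $\{\sigma \leq L\}$ the iterate lies outside $\ccalE$ with $G_k \geq G_{\max}$. Since $G_{\max} \geq G_{\min} \geq \eta_{\bblambda} S^2/2$, Lemma~\ref{lemma:DGA_expected_lambda_norm_squared} makes $\|\bblambda_k - \bblambda^\star\|^2$ a nonnegative supermartingale on this event; applying Doob's maximal inequality to it and translating between dual multiplier norms and dual function values through Corollary~\ref{corollary:near_optimal_dual_multipliers} produces the $1/(1 + \min\{0,\gamma - G_{\max} - 1\})$ factor in the denominator.

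The main obstacle will be the supermartingale verification in the exponential step. I need a Taylor-type bound for $e^{\beta \Delta_k}$ with $\Delta_k = G_{k+1}-G_k$ that is tight enough to simultaneously (i) exploit the $\rho$-proportional deterministic drift coming from the combination of PL and Lipschitz smoothness, and (ii) absorb the quadratic noise contribution coming from $\|\widehat{\bbs}_{\bbtheta}(\bblambda_k)\|^2$ in the Lipschitz upper bound, while keeping the exponential moments integrable over the excursion. The explicit form of $\beta$ reflects precisely this trade-off: the numerator $\rho L$ records the linear drift, while the quadratic term $\rho^2 \eta_{\bblambda}^2 L^2/4$ in the denominator records the variance overhead needed for the supermartingale inequality to close.
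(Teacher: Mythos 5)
Your proposal follows essentially the same route as the paper's proof: the Lipschitz-smoothness drift bound combined with the strong-convexity (PL) lower bound on $\|\bbs_{\bbtheta}(\bblambda_k)\|^2$ makes $G_k$ a supermartingale with expected decrement $\mu = \rho\eta_{\bblambda}^2 L S^2/2$ on $\ccalE$, the exponential transform $e^{\beta G_k}$ with $\beta = 2\mu/(\mu^2+\zeta^2)$ (where $\zeta = \eta_{\bblambda}S^2$ bounds the increments) remains a supermartingale, and optional stopping plus Markov's inequality gives the exponential factor, with a linear bound glued on outside $\ccalE$. The only difference is that you propose to verify the exponential supermartingale property directly via a Taylor-type bound on $\mathbb{E}[e^{\beta\Delta_k}\mid\ccalF_k]$, whereas the paper delegates exactly that computation to a cited lemma (\cite{eksin2012distributed}, Lemma 2) using the same two ingredients (strictly positive expected decrement and almost surely bounded increments), so the substance is identical.
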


\begin{proof}
See Appendix~\ref{proof:excursion_bound}.
\end{proof}
The proof exploits the supermartingale property of the optimality gaps $G_k$. A linear bound is due to Markov's inequality, while the exponential bound follows from an exponential transform that preserves the supermartingale property within $\ccalE$. Together with Proposition~\ref{proposition:dual_function_best_convergence}, the excursion result establishes that the dual dynamics regularly visit a near-optimality neighborhood and are unlikely to drift far between visits.
\section{Numerical Results: Optimal Power Control}
\label{sec:experiments}

We test the proposed state-augmented learning algorithm in a power control problem in a setting similar to~\cite{StateAugmented_RRM_GNN_naderializadeh_TSP2022}. To briefly summarize the setup, all wireless network realizations are sampled from a family of network configurations comprised of $N$ users, i.e., transmitter-receiver pairs $\{(\mathsf{Tx}_i, \mathsf{Rx}_i)\}_{i=1}^N$, dropped uniformly within an $R \times R$ square area with an average user density of $\rho = N / R^2$ users/\SI{}{\kilo\meter\squared}. Each transmitter communicates with a single designated receiver, and each receiver treats the incoming interference from all but its associated transmitter as noise. The (instantaneous) network state at time step $t$,  $\bbH_t\in\mathbb{R}^{N \times N}_+$, and the large-scale (slow-varying) network state, i.e., network configuration, $\bbH \in \mathbb{R}^{N \times N}_+$, contain the magnitude of all complex channel gains, where the instantaneous and long-term channel gains between transmitter $\mathsf{Tx}_i$ and receiver $\mathsf{Rx}_j$ are denoted by $h_{ij, t}, h_{ij} \in \mathbb{C}$.

Denoting the maximum transmit power by $P_{\max}$, resource allocation vectors $\bbp \in [0, P_{\max}]^N$ represent the transmit power levels of the transmitters. Given network configuration $\h$ and resource allocation vector $\bbp = \bbp(\h)$, the instantaneous performance vector $\bbr \big(\h_t, \bbp \big)$ represents the receiver rates, and the rate for each receiver $\mathsf{Rx}_i$ at timestep $t$ is given by
\begin{align} \label{eq:rate}
    \!\!r_i(\bbH_t, \bbp) = \log_2\left(1+\frac{p_i \left|h_{ii,t}\right|^2}{W N_0 + \sum_{j=1, j\neq i}^N p_j \left|h_{ji,t}\right|^2}\right),
\end{align}
where $W$ and $N_0$ denote the channel bandwidth and noise power spectral density (PSD), respectively, and it is assumed that capacity-achieving codes are used. The ergodic performance vector $\bbr \big( \h, \bbp(\h) \big) = (1/T) \sum_{t = 0}^{T-1} \bbr \big( \h_t, \bbp(\h) \big)$ accordingly represents the ergodic receiver rates and is evaluated over a horizon of $T$ time steps [cf.~\eqref{eq:ergodic-performance-function}]. We can write the power control problem for any given $\h$ as
\begin{subequations}\label{eq:power_control_problem}
\begin{alignat}{2}
    P^\star(\h) = &\maximum_{\bbp(\bbH)} &~& \frac{1}{T} \sum_{t=0}^{T-1} \mathbf{1}_N^\top \bbr \big( \bbH_t, \bbp(\bbH) \big),     \\
    &\st && \frac{1}{T} \sum_{t=0}^{T-1}  \bbr \big( \bbH_t, \bbp(\bbH) \big) \geq \mathbf{1}_N f_{\min},%
\end{alignat}
\end{subequations}
\noindent where we imposed a minimum ergodic rate requirement of $f_{\min}$ bps/Hz for all receivers by considering a minimum-rate constraint utility $\bbf(\bbx) = \bbx - \mathbf{1}_N f_{\min}$ and a sum-rate objective utility $f_0(\bbx) = \mathbf{1}^\top_N \bbx$.
We note that both the power control and dual regression tasks involve node-level decisions; therefore we seek permutation-equivariant policies.

\subsection{Graph Neural Network (GNN) Parametrizations}
\label{sec:gnn-parametrization}

Graph neural networks (GNNs) have been widely used in learning resource allocation policies for wireless networks owing to their permutation equivariance, transferability, and stability properties~\cite{eisen2020optimal, naderializadeh2022learning, lee2020graph, shen2019graph}. As such, we parametrize both state-augmented (SA) policies and dual-regression (DR) policies by GNN architectures. For brevity, we refer to these two models as primal-GNN and dual-GNN, respectively.

We represent the network configuration $\bbH$ as a graph $\ccalG_\bbH = (\ccalV, \ccalE, \bbW, \ccalY)$. The nodes $\ccalV = \{1, \ldots, N\}$ correspond to users (transmitter-receiver pairs). We denote by $\ccalE = \ccalV \times \ccalV$ the set of directed edges, and $\bbW: \ccalE \mapsto \reals$ weighs each edge from node $i$ to node $j$ by log-normalized channel gain between $i$th transmitter and $j$th receiver, i.e., $e_{ij} = \frac{1}{Z}\log_2\big(1 + \frac{P_{\max}|h_{ij}|^2}{W N_0}\big)$ where $Z > 0$ is a constant. The graphs are directed since $h_{ij} \neq h_{ji}$ in general. We sparsify by removing edges with $e_{ij} < 0.01$, yielding non-uniformly connected graphs whose node degrees vary with the local geometry and fading realization. Note that this sparsification affects only the GNN aggregation and the SINR in~\eqref{eq:rate} still sums over all interferers. 

We denote by $\ccalY$ the node features supported on these graphs. For the primal-GNN, the node features vary over time and at each time step $t$, we set them to the dual multipliers $\ccalY_t = \bblambda_{\lfloor t/T_0 \rfloor}(\bbH)$. For the dual-GNN, we always set the input node features to the DR features, i.e., $\ccalY = \bbX(\bbH)$.

We work with the weighted adjacency matrix $\bbW \in \reals^{|\ccalV| \times |\ccalV|}$ with $W_{ij} = e_{ij}$ for all $i, j \in \ccalV$. Both GNNs share the same backbone and process graph data through a cascade of $L$ graph convolutional network (GCN) layers, each mapping input node embeddings $\bbZ^{(\ell-1)}$ to output node embeddings by
\begin{equation}\label{eq:gcn_layer}
\hspace{-0.1cm}\bbZ^{(\ell)} \!=\! \bbPsi^{(\ell)}\big(\bbZ^{(\ell-1)}; \bbS, \bbTheta^{(\ell)}\big) = \varphi\! \left[\sum_{k=0}^{K} \bbS^k \bbZ^{(\ell-1)} \bbTheta^{(\ell)}_k\right]\!.
\end{equation}
In~\eqref{eq:gcn_layer}, $\Theta^{(\ell)} = \{\bbTheta^{(\ell)}_k \in \reals^{F_{\ell-1} \times F_\ell}\}_{k=0}^K$ is a set of learnable graph filter coefficients, and $\varphi$ is a pointwise nonlinearity, e.g., ReLU. The policy parametrizations $\bbphi$ and $\bbpsi$ include all the respective weights $\{\Theta^{(\ell)}\}_{1 \leq \ell \leq L}$, plus any other learnable weights such as those of normalization layers.

We use $L = 3$ layers, $F_\ell = 64$ hidden channels, and $K=2$-hop aggregations for both models. The dual-GNN feeds regression features directly as input embeddings, while the primal-GNN first passes the dual multipliers through a sinusoidal embedding composed with an MLP. The output dimension is $F_L = 1$ for both models, with a sigmoid activation for the primal-GNN to ensure $\bbp \in [0, P_{\max}]^N$, and a negated log-sigmoid scaled by $\lambda_{\max} = 50$, set conservatively, for the dual-GNN to ensure nonnegative and bounded predictions. 
The dual-GNN is invoked once at the start of inference, while the primal-GNN is called at every time step. A single forward pass through either architecture has polynomial cost in the network size, making them suitable for large-scale deployments.

\begin{remark}
Our SA+DR framework with GNN-parametrized policies is applicable to other resource allocation setups. In applications such as link scheduling, it is primarily the action space and the functional form of the utility and constraints that changes, while the graph structure, the ergodic-average constraint formulation, and the dual dynamics with their policy-switching behavior remain operationally the same. The ablated state-augmented algorithm (without the improvements proposed here) has already been successfully applied to link scheduling~\cite{garcia2025linkscheduling} and network slicing~\cite{uslu2024learning}. We leave the extension of the SA+DR pipeline to these settings for future work.
\end{remark}


\begin{figure*}[t!]
    \centering
    \includegraphics[width = \linewidth]{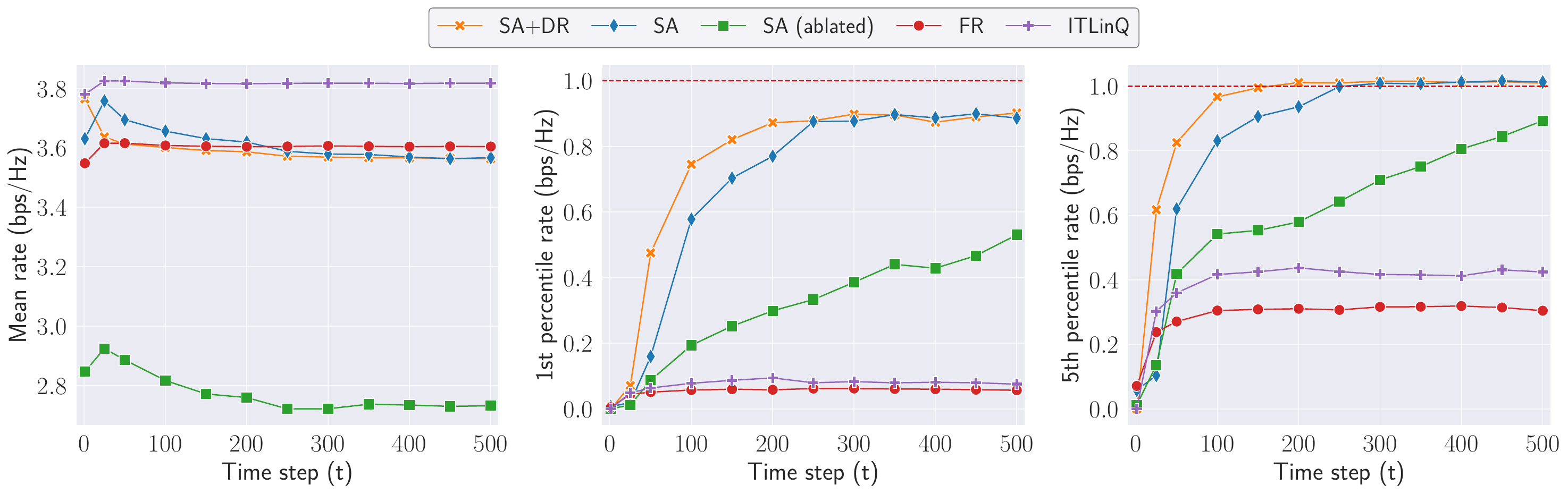}

    \caption{Comparison of time evolution of mean, 1st, and 5th percentiles of all receiver ergodic rates for all the algorithms.
    }
    \label{fig:test_evolution_over_time}
\end{figure*}

\begin{figure*}[t!]
    \centering
    \includegraphics[width=\linewidth]{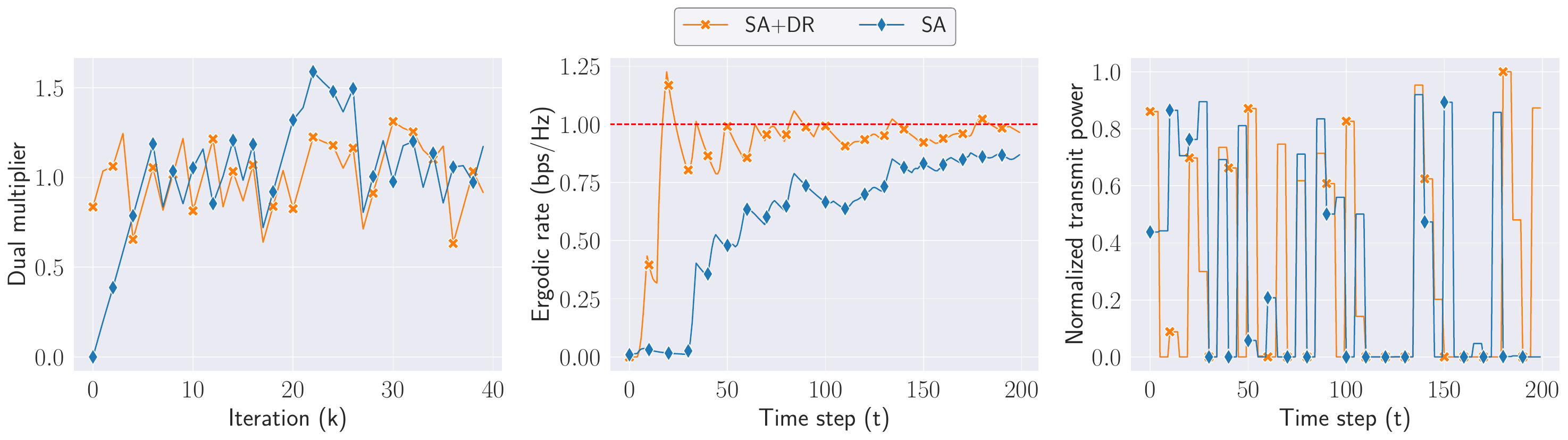}
    \caption{Comparison of dual multipliers, constraints and power allocations over time steps with (SA+DR) and without (SA) near-optimal initialization for an example test network user.}
    \label{fig:example-test-network}
\end{figure*}

\subsection{Ablations \& Baseline Methods}
We compare our proposed state-augmented algorithm with dual variable regression (SA+DR) with (i) the proposed state-augmented algorithm with training dual multipliers sampled from the dual dynamics and zero initialization (SA), (ii) ablated state-augmented algorithm of~\cite{StateAugmented_RRM_GNN_naderializadeh_TSP2022} (SA-ablated), i.e., the state-augmented algorithm with dual multipliers sampled from a uniform prior and zero initialization. For fair comparison, proposed and ablated state-augmented algorithms all share the same GNN backbones. We also compare against two non-learning-based baselines, namely FR and ITLinQ~\cite{naderializadeh2014itlinq, naderializadeh2017ultra}. The FR method employs a fixed full power transmission policy for all transmitters at each time step, whereas in ITLinQ, a subset of transmitters is scheduled at each time to transmit with full power based on an information-theoretic optimality condition for treating interference as noise. 

For the power control problem, we also use the full-reuse (FR) baseline rates as regression features,
\begin{align}\label{eq:example-dual-regression-feature}
    \bbX(\bbH) = \bbr(\bbH, \bbp_{\mathrm{FR}}) = \bbr \big( \bbH, \mathbf{1}_N P_{\max} \big),
\end{align}  
where a single-shot estimation of the ergodic performance function is performed by allocating a flat $P_{\max}$ amount of resources uniformly across the network. The FR rates require only a single evaluation of the rate function under a trivial power allocation, and as shown in Fig.~\ref{fig:dr-scatter}, they correlate inversely with optimal dual multiplier magnitudes.

\subsection{Experiment Configuration}

We briefly mention the experiment parameters, the majority of which carry over from~\cite{StateAugmented_RRM_GNN_naderializadeh_TSP2022} unchanged. The network operates over $T = 200$ time steps and each time step lasts $10$ milliseconds. We consider a dual-slope path-loss model with log-normal shadowing with standard deviation $\sigma = 7.0$ for the large-scale fading and Rayleigh distribution with pedestrian velocity of $1$ m/s for the small-scale fading. Unless otherwise noted, we set $P_{\max} = 10$ dBm, $N_0=-174$ dBm, $W = 20$ MHz, $f_{\min} = 1.0$ bps/Hz. We draw network realizations from a family of configurations with $N = 100$ users and network area side length $R$ chosen suitably to obtain an average user density of $\rho = 8$ users/\SI{}{\kilo\meter\squared}. We update the multipliers every $T_0 = 5$ time steps with a step size of $\eta_{\bblambda} = 0.2$. 

We generate $128$ network realizations in total and split them into training, validation, and test datasets of size $|\ccalD_{\h}| = 128$, $|\ccalV_{\h}| = 16$, $|\ccalT_{\h}| = 64$, respectively. We use 3-layer GNN architectures with hidden layer features of $F_\ell = 64, \ell = 1, 2, 3$ for both primal and dual models. We set the learning rate for both models at $\eta_{\bbphi} = \eta_{\bbpsi} = 10^{-3}$. We refer to Appendix~\ref{app:sadr_training} for additional training details, including pseudocode, training plots, and guidelines on important design parameters.

\subsection{Test Performance of the Proposed SA+DR Algorithm}

Fig.~\ref{fig:test_evolution_over_time} showcases the performance of our proposed method (SA+DR) against the ablated methods and the baselines over the test dataset $\ccalT_{\h}$. For the sake of this figure, the operation window is extended to $T = 500$ time steps, and ergodic rates are computed over a moving window of $200$ time steps. Note that all state-augmented algorithms significantly outperform the baselines in terms of $1{\text{st}}$ and $5{\text{th}}$ percentile rate metrics by eventually (almost) satisfying all the minimum rate constraints at the expense of a small degradation in sum-rate objective values. In particular, the proposed SA+DR algorithm provides near-feasible rates to all users within fewer time steps than the SA algorithm; the time it takes for the 1st and 5th percentile rates to come close to $f_{\min}$, e.g., within 5\%-10\% of $f_{\min}$, is roughly halved. Although the ablated state-augmented algorithm gradually improves the percentile rates---an effect that can be improved for all state-augmented algorithms by increasing the dual step size---it exhibits a noticeable degradation in the objective utility.  

\begin{figure}[ht!]
    \centering
    \includegraphics[width=\linewidth]{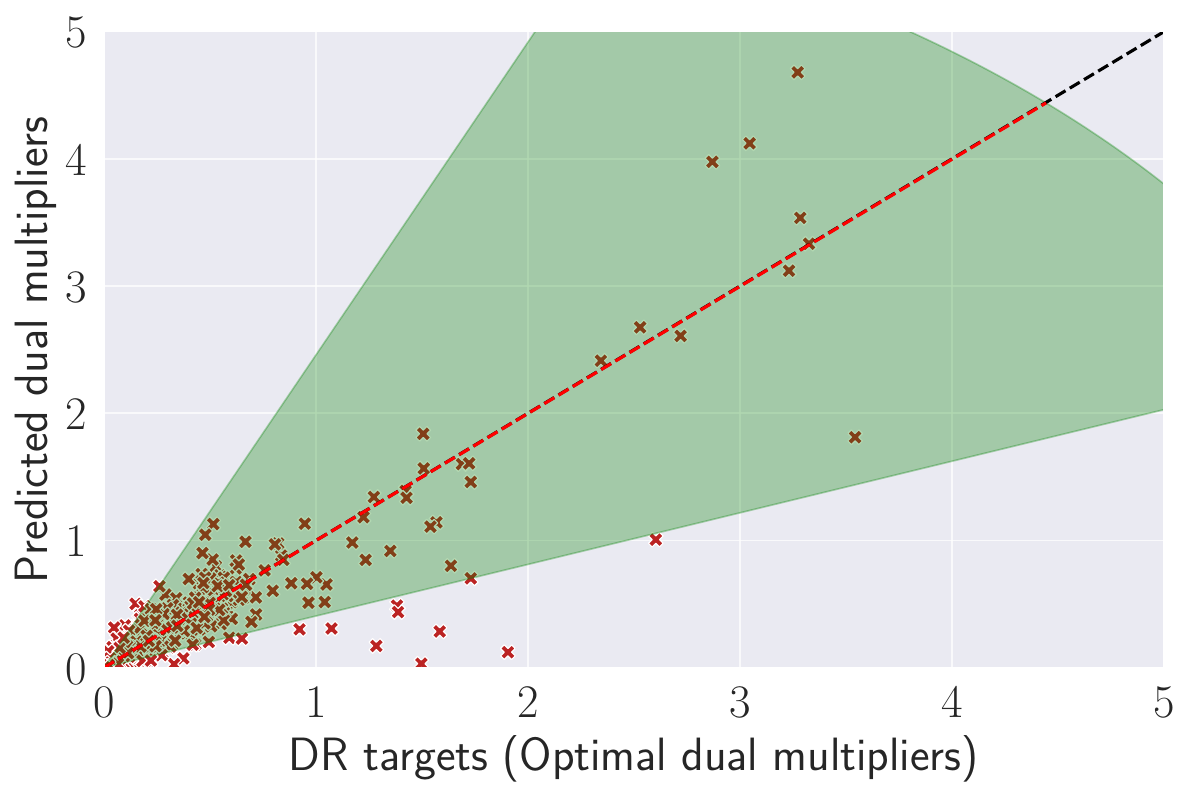}
    
    \caption{Scatter plot of target (optimal) dual multipliers against predictions of the trained dual-GNN. Green cone covers a 95\% confidence interval and red dashed line is the center line of the cone. Black dashed line corresponds to the $y=x$ line.}
    \label{fig:dr-scatter}
\end{figure}

\begin{figure*}[ht!]
    \centering
    \begin{minipage}{.64\linewidth}
    \includegraphics[width = \linewidth]{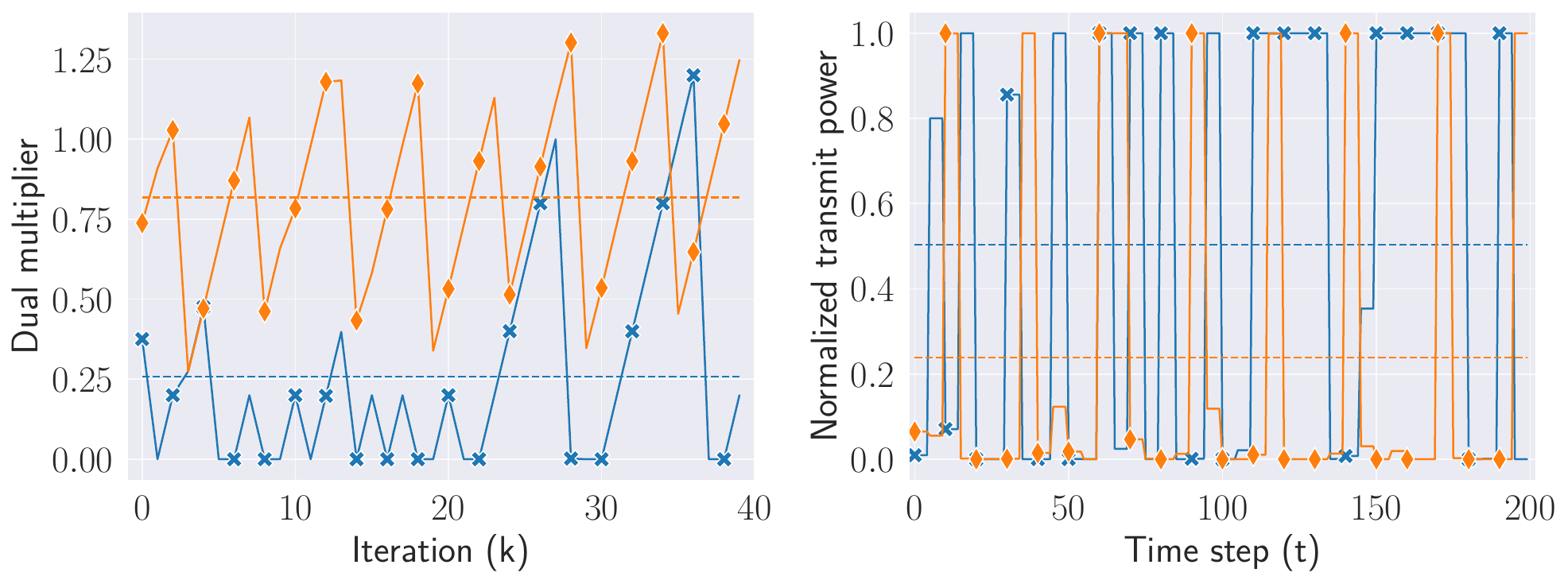}
    \vspace{-1.5em}
        \subcaption[]{}
    \end{minipage}
    \hfill
    \begin{minipage}{.35\linewidth}
    \includegraphics[width = \linewidth]{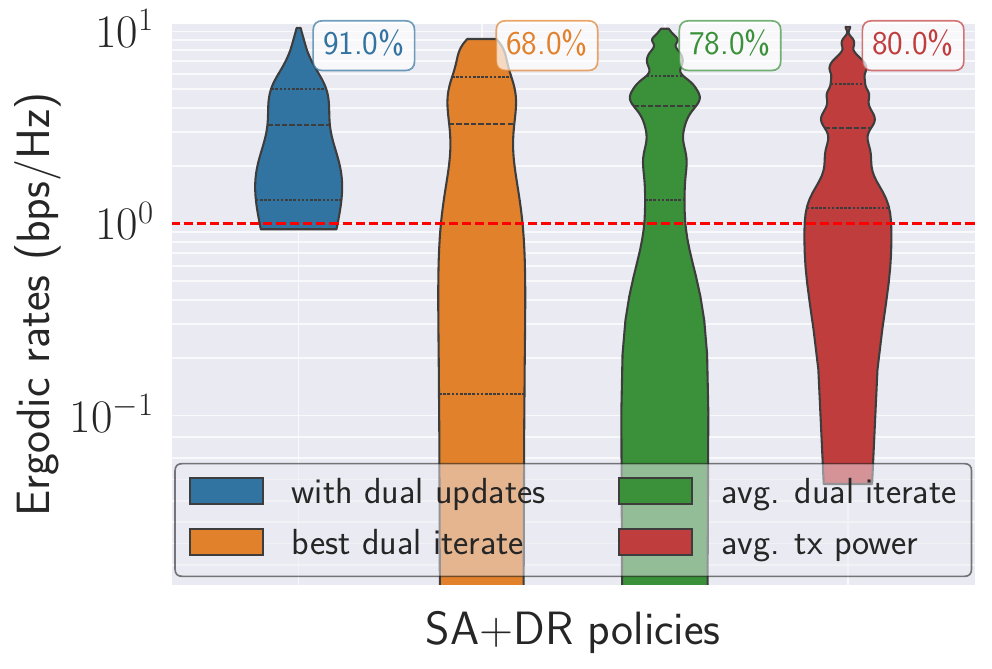}
    \vspace{-1.5em}
        \subcaption[]{}
    \end{minipage}
    \hfill
    \vfill 
    \caption{(a) Example policy switching behavior of two neighboring (mutually-interfering) users. Dashed lines indicate the time / iteration averages. (b) Distribution of the ergodic rates obtained under SA+DR dynamics (shown in blue) with those obtained under three different kinds of fixed policies: (i) SA+DR policy executed for the best dual iterate without dual updates (shown in orange), (ii) SA+DR policy executed for the average dual multiplier without any dual updates (shown in green), and (iii) fixed average transmission power of the SA+DR policy (shown in red). Best dual iterate is the one for which maximum constraint violation is lowest. 
    Computation of the best and average iterates in (i)--(iii) are done for the trajectory given by SA+DR algorithm. We also report the percentage of users that meet $f_{\min}$ under each policy.
    }
    \label{fig:policy-switching}
\end{figure*}

\begin{figure*}[ht!]
    \centering
    \includegraphics[width = \linewidth, height = .28\linewidth]{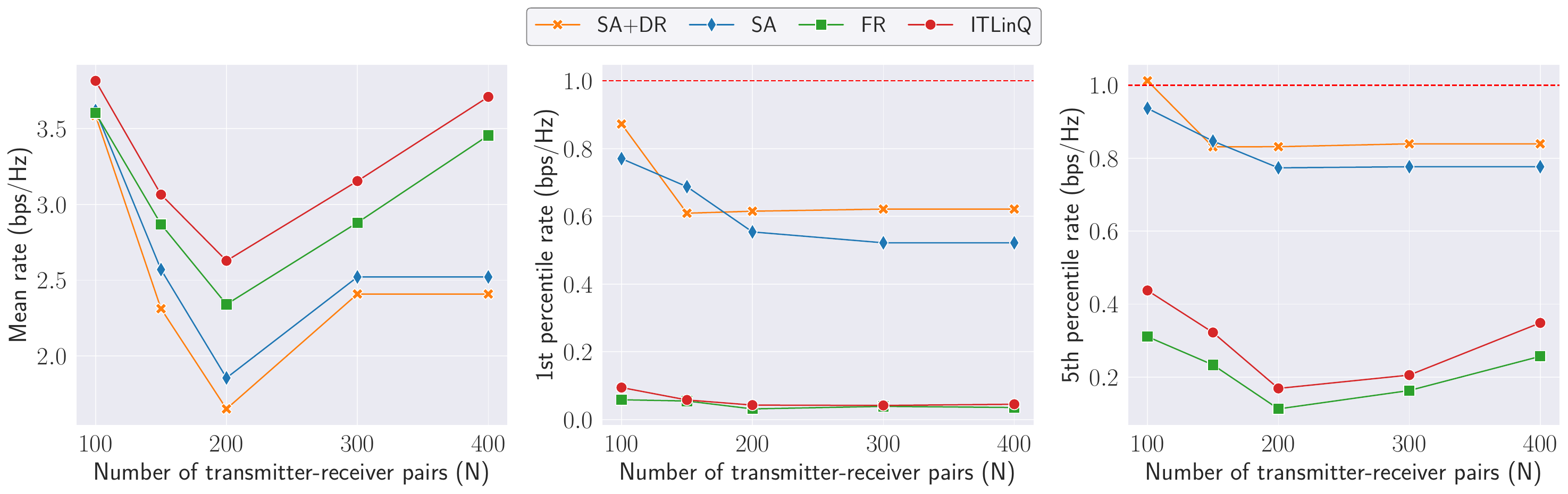}
    \caption{Transferability of the  policies learned on networks with $N = 100$ transmitter-receiver pairs to  larger networks, for various values of $N$ up to $400$. The average user density $\rho$ is the same (fixed) across all $N$.}
    \label{fig:transferability}
\end{figure*}

In Fig.~\ref{fig:example-test-network}, we plot the time evolution of the $n$th component of the dual multiplier vector $\bblambda_{\lfloor t/ T_0\rfloor}$, ergodic rate vector $(1/t) \sum_{\tau = 0}^{t-1} \bbr\left (\h_{\tau}, \bbp_{\bbphi}\big( \h, \bblambda_{\lfloor \tau / T_0 \rfloor}; \bbphi^\star \big) \right)$, and normalized transmit power vector $\bbp_{\bbphi}\big( \h, \bblambda_{\lfloor t / T_0 \rfloor}; \bbphi^\star \big) / P_{\max}$ for an example test user/node $n \in [N]$. Whenever the rate constraints are violated (resp. satisfied) for a given receiver, the corresponding dual multipliers increase (resp. decrease) and the dual dynamics drive both state-augmented policies to gradually improve the ergodic rates of all receivers by adapting the transmission powers based on the inputted dual multipliers. For the user plotted in particular, the dual dynamics do not converge to a single (optimal) dual multiplier and the resulting transmission policies meet the minimum rate requirement by alternating between high and low transmission modes. 

The policy switching behavior depicted in Fig.~\ref{fig:example-test-network} is crucial for guaranteeing the ergodic feasibility of users with the most difficult channel conditions and is further illustrated in Fig.~\ref{fig:policy-switching}. We underscore the close overlap between the SA+DR and SA transmission policies in the later phase of the operation window (after $t = 30$) and how near-optimal dual multiplier initialization of the SA+DR policy converges to a stochastic on/off policy much earlier than without the initialization. Finally, Fig.~\ref{fig:dr-scatter} showcases the close overlap between the DR targets and the optimal dual multipliers predicted by the trained dual-GNN for a subset of validation networks. 


\subsection{Policy Switching \& Infeasibility of Primal-Dual Iterates}
Fig.~\ref{fig:policy-switching}(a) showcases the dual multiplier and transmit power evolutions over time for the SA+DR algorithm, but this time for two different users from an example test network. These two nodes/users correspond to two neighboring transmitter-receiver pairs that cause significant interference to one another. During any given iteration window, only one of the two transmitters is active, i.e., transmits with high power. This way, associated receivers of active transmitters enjoy high instantaneous rates and since the transmitters take turns to transmit, both receivers maintain ergodic rates greater than $f_{\min}$. Although we plotted only the SA+DR algorithm, the same conclusions would hold also for the SA algorithm without dual multiplier initialization. Fig.~\ref{fig:policy-switching}(b) compares the SA+DR policy with several (deterministic) policies derived from state-augmented dynamics. The SA+DR policy significantly benefits from the continuous execution of dual updates---even if near-optimal initialization is achieved.

\subsection{Transferability of GNN-Parametrized SA+DR Policies}
One of the main benefits of GNN parametrizations is that they scale well to graphs of different sizes. In particular, we can train GNN-parametrized policies over small- to medium-sized graphs and evaluate them on larger graphs. In Fig.~\ref{fig:transferability}, our proposed GNN-parametrized SA+DR policy trained on networks of size $N = 100$ and average user density of $\rho = 8$ users/\SI{}{\kilo\meter\squared}, is evaluated on network configurations of varying sizes with the same average user density. Trained primal- and dual-GNN models both exhibit notable transferability to unseen network configurations in this ``fixed-density'' scenario where channel statistics mostly remain similar across scales. 
\section{Conclusion}\label{sec:conclusion}
We tackled a stochastic resource optimization problem in wireless networks. We proposed a state-augmented policy with dual variable regression (SA+DR) for fast sampling of near-optimal and feasible resource allocations. Leveraging GNN parametrizations in a power control application, we demonstrated the scalability and transferability of the learned policies. We supported extensive experimental validation with convergence and excursion theorems for the dual descent dynamics. 
The extension of the SA+DR framework to other classes of resource allocation problems is a natural next step. 


\begin{appendices}

\renewcommand{\thesubsection}{\Alph{section}.\arabic{subsection}}

\setcounter{subsection}{0}


\section{Implementation \& Training of SA+DR Policies}
\label{app:sadr_training}

\subsection{Training State-Augmented (SA) Policies}
 
In practice, given access to a dataset of i.i.d. network configuration samples $ \bbH_{b}$ for $b = 0, \ldots, B-1$ drawn from $\Dh$ and a corresponding sequence of instantaneous network states $\bbH_{b, t}$ at time steps $t \in \{0, \ldots, T-1 \}$, we replace the expectations over $\Dh$ with empirical (batch) averages. Similarly, given a (conditional) distribution of training dual multipliers $\Dlambda(\h)$, we draw batches of training dual multipliers and network configurations jointly from $\Dlambdah = \Dlambda(\h) \Dh$ and write the empirical Lagrangian corresponding to \eqref{eq:param_lagrangian_state_augmented_over_graph_distribution} as
\begin{align}\label{eq:empirical_lagrangian}
    \hspace{-.25cm}\widehat{\E}_{\Dlambdah} \big[ \Lphi (\bbphi; \bblambda, \bbH) \big] \approx \frac{1}{BM}
    \sum_{b,m} \ccalL_{\bbphi}\Big( \bbphi; \bblambda_m(\h_b), \bbH_b \Big).
\end{align}
Given initial model parameters $\bbphi_0$ and a step size $\eta_{\bbphi}$, we evaluate \eqref{eq:empirical_lagrangian} over sampled batches of dual multipliers and training network configurations at each training epoch $n=0, 1, \ldots, N_{\mathrm{SA}}-1$, 
and replace the maximization step in \eqref{eq:optimal_state_augmented_over_graph_distribution} with gradient-ascent updates on the model parameters, i.e.,
\begin{align} \label{eq:state-augmented-gradient-descent}
    \bbphi_{n+1} &= \bbphi_{n} + \eta_{\bbphi} \nabla_{\bbphi} \widehat{\E}_{\Dlambdah} \big[ \ccalL_{\bbphi}(\bbphi_n; \bblambda, \bbH) \big].
\end{align}

We note that the performance of optimal model parameters $\bbphi^\star$ can be drastically affected by the distribution of dual multipliers considered. Prior state-augmented learning works \cite{calvo2021state, StateAugmented_RRM_GNN_naderializadeh_TSP2022} contended with a uniform dual multiplier distribution $\Dlambda(\h) = \mathrm{Uniform}(\bb0, \mathbf{1})$ for all $\h$ and observed desirable performance in their experiments. More generally, we can choose a uniform prior distribution $\mathrm{Uniform}(\bb0, \bblambda_{\max})$ and the hyperparameter $\bblambda_{\max}$ can be tuned and/or adjusted dynamically (possibly independently for each training network configuration $\h_b$) to match the dual dynamics trajectories as the training progresses. To do so, we train the state-augmented model parameters with a given initial prior distribution $\Dlambda(\h) = \mathrm{Uniform}(\bb0, \bblambda_{\max})$ until the first checkpoint epoch $N^{0}_{\mathrm{SA}}$. We choose $\bblambda_{\max}$ as a fixed hyperparameter. Given $\bbphi_{N^{0}_{\mathrm{SA}}}$, we roll out the state-augmented dual dynamics, save the trajectories to a buffer, and sample dual multipliers from the buffer $\bblambda(\h_b) \sim \Dlambda(\h_b; \bbphi_{N^{0}_{\mathrm{SA}}})$ for each $\h_b$ until the next checkpoint. In general, at each training epoch $n$, we jointly draw dual multipliers and network configurations as
\begin{align} \label{eq:state-augmentation-training-sampler}
    (\bblambda_m(\h_b), \h_b) \sim \Dlambdah \coloneqq \Dlambda \left( \h; \bbphi_{ \lfloor n / N^{0}_{\mathrm{SA}} \rfloor . n} \right)\Dh,
\end{align} 
and update $\bbphi_{n+1}$ according to~\eqref{eq:state-augmented-gradient-descent} with $\Dlambdah$ given in \eqref{eq:state-augmentation-training-sampler}.

\begin{figure}
\begin{minipage}[t!]{.98\linewidth}
\begin{algorithm}[H]
\caption{Offline Training of the SA+DR Algorithm}
\label{alg:SA-train}
\begin{algorithmic}[1]
\STATE {\bfseries Input:} Training networks $\{\bbH_b\}_{b=0}^{B-1}$; initial prior $\Dlambda(\bbH) = \mathrm{Uniform}(\bb0, \bblambda_{\max})$; maximum and checkpoint epochs $N_{\mathrm{SA}}, N_{\mathrm{DR}}$, and $N^0_{\mathrm{SA}}$; step sizes $\eta_{\bbphi}, \eta_{\bbpsi}, \eta_{\bblambda}$; update window $T_0$.

\STATE Initialize primal-GNN parameters $\bbphi_0$ and sampling distributions (buffers) $\Dlambda(\bbH_b) \gets \mathrm{Uniform}(\bb0, \bblambda_{\max})$ $\forall b$.

\medskip
\STATE \textit{// Phase 1: State-augmented (SA) training}
\FOR{$n = 0, \ldots, N_{\mathrm{SA}} - 1$}
    \STATE Run stochastic gradient ascent on primal-GNN parameters with mini-batch of networks $\{\bbH_b\}$ and, $M$ dual multipliers $\{\bblambda_m(\bbH_b)\}_{m=0}^{M-1} \sim \Dlambda(\bbH_b)$ sampled for each $\bbH_b$.
    \STATE $\bbphi_{n+1} = \bbphi_n + \eta_{\bbphi} \nabla_{\bbphi} \widehat{\E}_{\Dlambdah} \big[ \ccalL_{\bbphi}(\bbphi_n; \bblambda, \bbH) \big]$ \hfill $\triangleright$~\eqref{eq:state-augmented-gradient-descent}
    \IF{$(n+1) \mod N^0_{\mathrm{SA}} = 0$}
        \FOR{each training network $\bbH_b$}
            \STATE Roll out dual dynamics~\eqref{eq:dual_dynamics} with frozen $\bbphi_{n+1}$ for $K = \lfloor T / T_0 \rfloor$ iterations.
            \STATE Update $\Dlambda(\bbH_b) \gets \Dlambda(\bbH_b; \bbphi_{n+1})$. \hfill $\triangleright$~\eqref{eq:state-augmentation-training-sampler}
        \ENDFOR
    \ENDIF
\ENDFOR
\STATE Set $\bbphi^\star \gets \bbphi_{N_{\mathrm{SA}}}$.

\medskip
\STATE \textit{// Phase 2: Regression target estimation}
\FOR{each $\bbH_b$}
    \STATE $\displaystyle \widehat{\bblambda}^\dagger(\bbH_b; \bbphi^\star) \gets \E_{\bblambda \sim \Dlambda(\bbH_b; \bbphi^\star)}[\bblambda]$ \hfill $\triangleright$~\eqref{eq:empirical-regression-target}
\ENDFOR
\STATE Compute DR features $\bbX_b = \bbX(\bbH_b)$ $\forall b$, e.g., via~\eqref{eq:example-dual-regression-feature}.

\medskip
\STATE \textit{// Phase 3: Dual-regression (DR) training}
\STATE Initialize dual-GNN parameters $\bbpsi_0$.
\FOR{$n = 0, \ldots, N_{\mathrm{DR}} - 1$}
    \STATE Run mini-batch gradient descent on dual-GNN parameters.
    \STATE $\bbpsi_{n+1} \!=\! \bbpsi_n \!-\! \eta_{\bbpsi} \nabla_{\bbpsi} \widehat{\E}_{\bbH} \Big[ \ccalF \big( \bbX(\bbH), \bbH, \bbphi^\star; \bbpsi_n \big) \Big]$ \hfill $\triangleright$~\eqref{eq:dual-regression-gradient-descent}
\ENDFOR
\STATE Set $\bbpsi^\star \gets \bbpsi_{N_{\mathrm{DR}}}$.

\medskip
\RETURN{Trained parameters $\bbphi^\star$ and $\bbpsi^\star$.}
\end{algorithmic}
\end{algorithm}
\end{minipage}
\end{figure}

\begin{figure}
\begin{minipage}[t!]{.98\linewidth}
\centering
\begin{algorithm}[H]
\caption{Online Execution of the SA+DR Algorithm}
    \label{alg:SA-test}
    \begin{algorithmic}[1]
    \STATE {\bfseries Input:} Optimal SA and DR parametrizations $\bbphi^\star$, $\bbpsi^\star$, a network configuration $\h$ and corresponding sequence of network states $\{ \bbH_{t} \}_{t=0}^{T-1}$, dual-regression features $\bbX$, update window $T_0$, step size $\eta_{\bblambda}$.
    \STATE Initialize $k \gets 0$ 
, $\bblambda_0 \gets \bbd_{\bbpsi} \big(\bbX, \bbH; \bbpsi^\star\big)$.
    
    \STATE Run the dual dynamics of \eqref{eq:dual_dynamics}, i.e.,
    { \FOR {$t = 0, \ldots, T-1$}
        \STATE Generate state-augmented decisions $\bbp_{\bbphi}(\bbH, \bblambda_k; \bbphi^\star)$.
        \IF {$t+1 \mod T_0 = 0$} 
        \STATE $\bblambda_{k+1} = \Big[ \bblambda_k - $\\
        $\eta_{\bblambda} \bbf \left( \frac{1}{T_0} \sum_{t=kT_0}^{(k+1)T_0 - 1} \bbr \big( \bbH_t, \bbp_{\bbphi}\big(\bbH, \bblambda_k; \bbphi^\star\big) \big) \right) \Big]_+$.
            \STATE $k \gets k+1$.
        \ENDIF
   \ENDFOR }
   
    \RETURN{Allocation decisions $\{ \bbp_{\bbphi}(\bbH, \bblambda_{\lfloor t/T_0 \rfloor}; \bbphi^\star) \}_{t=0}^{T-1}$}
\end{algorithmic}
\end{algorithm}

\end{minipage}
\end{figure}

\subsection{Training Dual-Regression (DR) Policies}

To estimate the dual multiplier regression targets by \eqref{eq:ergodic_average_dual_multiplier} in practice, we roll out $M$ trajectories of the state-augmented dual dynamics over $K$ iterations and set
\begin{align} \label{eq:empirical-regression-target}
    \widehat{\bblambda}^\dagger(\bbH_b; \bbphi_{N_{\mathrm{SA}}}) = \frac{1}{M K} \sum_{m = 0}^{M - 1} \sum_{k = 0}^{K-1} \bblambda_{k, m}(\bbH_b), 
\end{align}
for each $\bbH_b$, $b = 0, \ldots, B-1$. Similar to \eqref{eq:empirical_lagrangian}, we compute the regression loss function \eqref{eq:dual-regression-loss} with regression targets of \eqref{eq:dual_regression_problem:optimal_dual_regressor}, averaged over a batch of training network configurations as
\begin{align} \label{eq:empirical-dual-regression-loss}
    &\widehat{\E}_{\bbH} \Big[ \ccalF \big( \bbX(\h), \bbH, \bbphi_{N_{\mathrm{SA}}}; \bbpsi) \big) \Big] \nonumber \\
    &= \frac{1}{B} \sum_{b=0}^{B-1} \; \big \| \bbd_{\bbpsi}\big( \bbX_b(\h_b), \bbH_b; \bbpsi \big) - \widehat{\bblambda}^\dagger \big( \bbH_b, \bbphi_{N_{\mathrm{SA}}} \big) \big \|^2_1.
\end{align}
\noindent Given initial dual-regression model parameters $\bbpsi_0$, step size $\eta_{\bbpsi}$, and training epochs $n = 0, 1, \ldots, N_{\mathrm{DR}} - 1$, we update the model parameters as
\begin{align} \label{eq:dual-regression-gradient-descent}
    \bbpsi_{n+1} = \bbpsi_n - \eta_{\bbpsi} \nabla_{\bbpsi} \widehat{\E}_{\Dh} \Big[ \ccalF \big( \bbX(\h), \bbH, \bbphi_{N_{\mathrm{SA}}}; \bbpsi) \big) \Big].
\end{align}

We summarize the offline training and online execution procedures in Algorithm~\ref{alg:SA-train} and Algorithm~\ref{alg:SA-test}, respectively. Algorithmically, dynamic adjustment of the dual multiplier training distribution, solving a dual regression problem~\eqref{eq:dual_regression_problem:optimal_dual_regressor} after state-augmented training, and the near-optimal initialization of the dual multipliers [cf.~\eqref{eq:dual-multiplier-regressor-init}] during the execution phase, are the primary enhancements we introduce to the state-augmented algorithm proposed in prior work~\cite{StateAugmented_RRM_GNN_naderializadeh_TSP2022}. Figures~\ref{fig:train_dual}--\ref{fig:train_primal} provide insights into the proposed SA+DR training algorithm.

\begin{figure}[ht!]
        \centering
        \includegraphics[width = .9\linewidth]{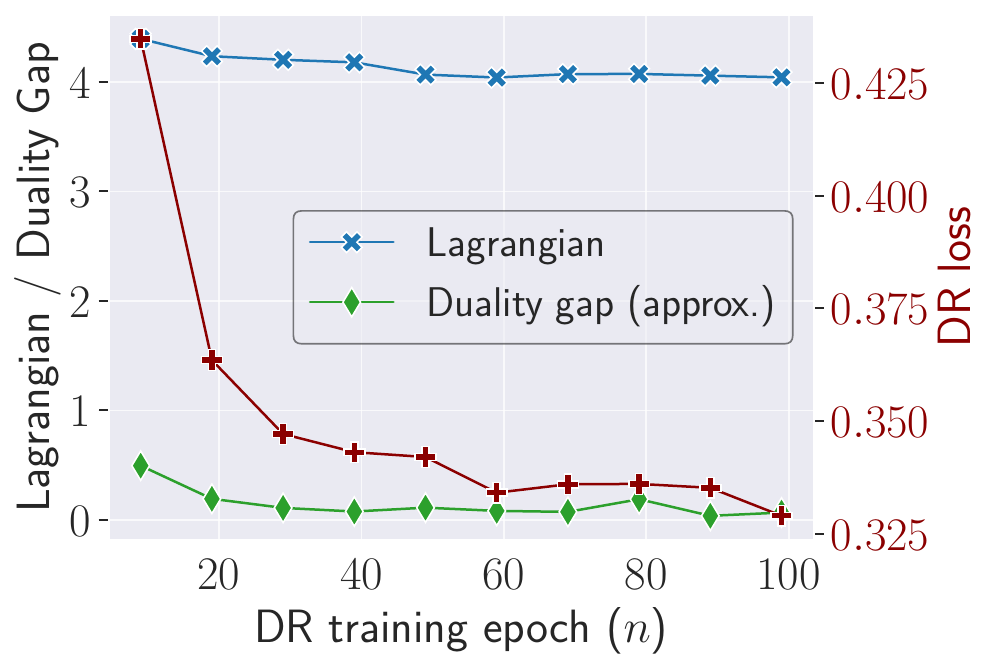}
    \caption{Training progress of the DR model. For each training epoch $n$, we plot the DR loss (shown in dark red), the approximate duality gap and the state-augmented Lagrangian (averaged across all validation networks and corresponding dual multipliers predicted by the DR model) evaluated for the fully-trained SA model $\bbphi_{N_{\mathrm{SA}}}$ and DR model $\bbpsi_{n}$. For visualization, we omitted the early epochs of the DR training. 
    }
    \label{fig:train_dual}
\end{figure}

\begin{figure}[ht!]
    \centering
    \begin{subfigure}[t]{\linewidth}
        \includegraphics[width = \linewidth]{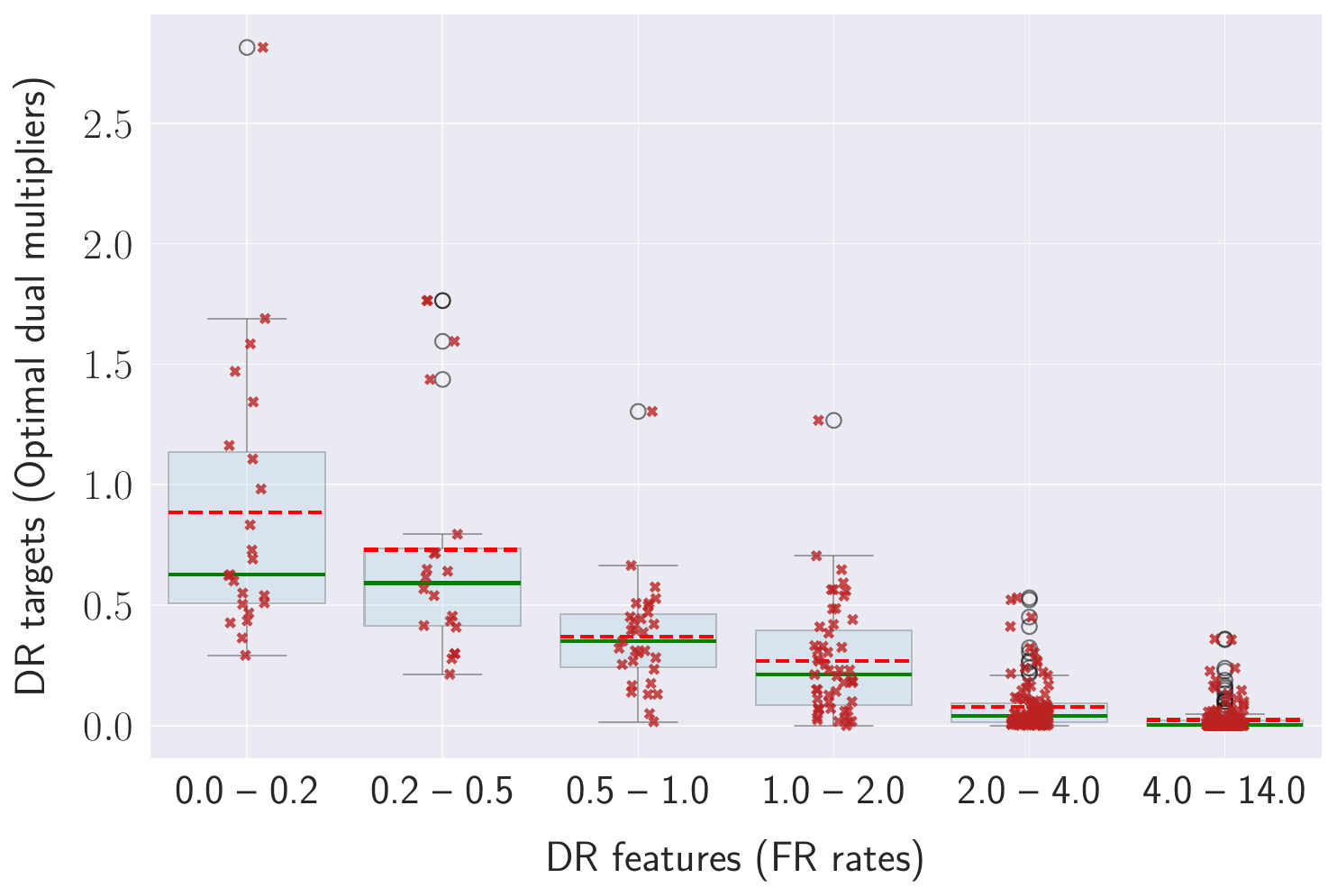}
    \end{subfigure}%
    \caption{Scatter-plot of DR features against the DR targets. The optimal dual multipliers, and their estimates, tend to be larger for interference-limited users with weaker channel gains, which benefit the most from DR initialization.
    }
    \label{fig:train_dual_features_scatter}
\end{figure}

\begin{figure*}[ht!]
    \hfill
    \centering
    \begin{subfigure}[t]{.33\linewidth}
        \centering
        \includegraphics[width = \linewidth]{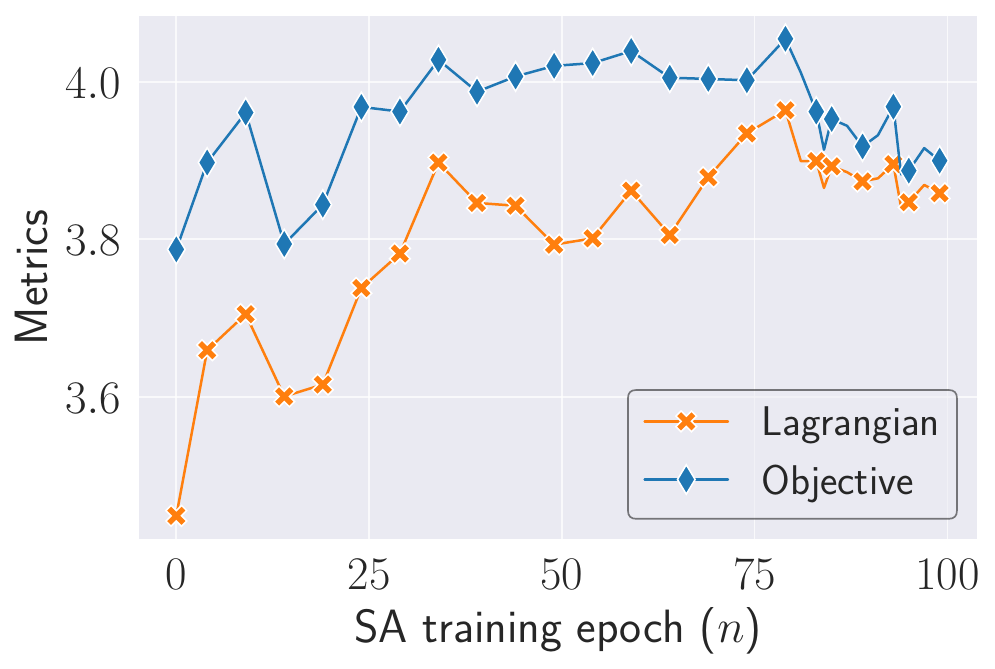}
    \end{subfigure}%
    \hfill
    \centering
    \begin{subfigure}[t]{.33\linewidth}
        \centering
        \includegraphics[width = \linewidth]{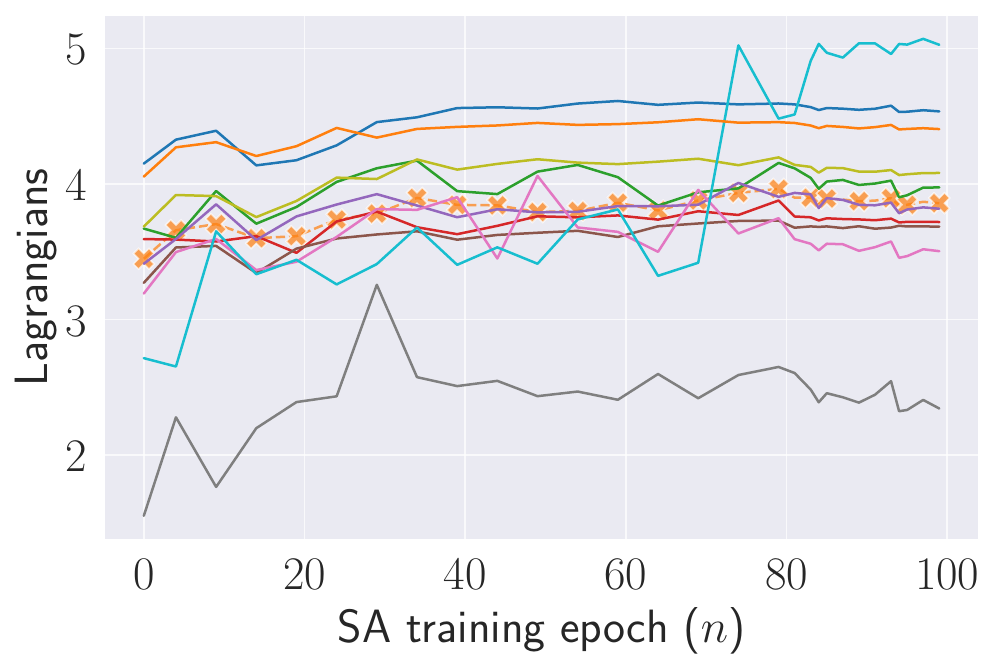}
    \end{subfigure}%
    \hfill
    \centering
    \begin{subfigure}[t]{.33\linewidth}
        \centering
        \includegraphics[width = \linewidth]{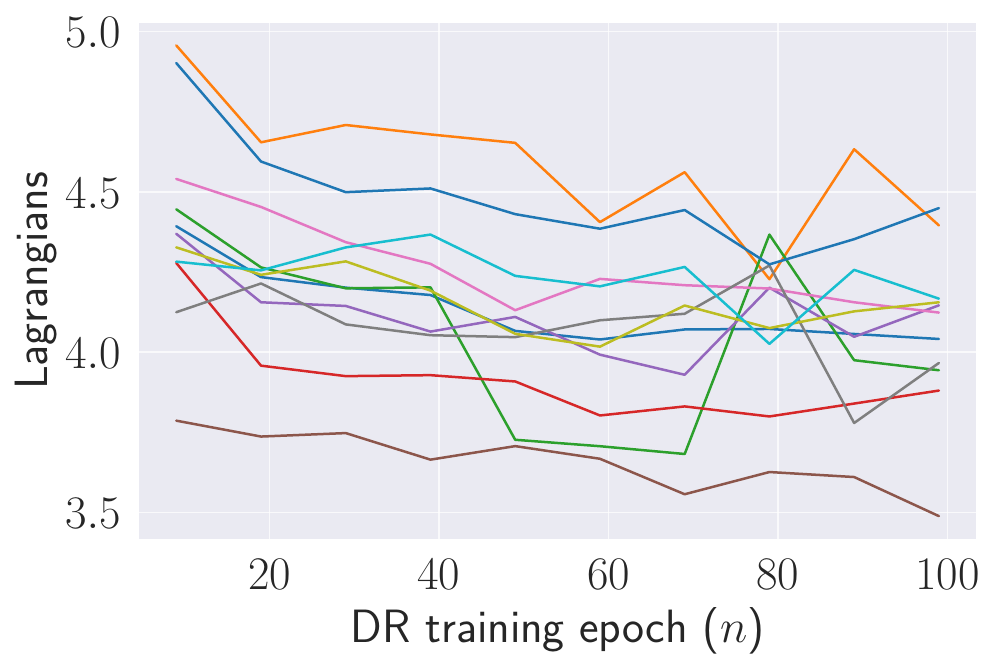}
    \end{subfigure}%
    \caption{Progress of SA+DR training, evaluated on the validation set. Left: the state-augmented Lagrangian and the objective at the estimated optimal dual multipliers; their difference, i.e., the approximate duality gap, shrinks as training proceeds. Middle: per-network state-augmented Lagrangians (averaged over dual multipliers) being maximized as the SA model trains. Right: the same Lagrangians, now evaluated at the DR-predicted duals, are minimized as DR training progresses.
    }
    \label{fig:train_primal}
\end{figure*}

\subsection{Other Practical Considerations \& Training Details}
We first train the primal-GNN for $N_{\mathrm{SA}} = 100$ epochs on the training dataset $\ccalD_{\h}$. Each training epoch iterates over $\ccalD_{\h}$ with a mini-batch size of $B_{\h} = 8$ networks, i.e., a total of $100 \times 128/8 = 1600$ iterations, and we sample $B_{\bblambda}$ = 4 dual multipliers for each network. The dual multiplier sampling distribution is initially set to a standard uniform prior, and we initialize an empty dual multiplier buffer for each training network configuration. The buffer capacity is set to 100 dual multiplier vectors per network configuration. We test the primal-GNN on both the training and validation datasets every $N^{0}_{\mathrm{SA}} = 2$ epochs. We roll out the dual dynamics with the same choices of $T, T_0$ and $\eta_{\bblambda}$ as the ones we used in the actual test phase. That is, we roll out the dynamics for $T = 200$ time steps, where the dual update (iteration) window is set to $T_0 = 5$, and the dual step size is set at $\eta_{\bblambda} = 0.2$. The initial multipliers are set to an average multiplier we compute across the buffer of each network configuration, which we use as a proxy for an optimal dual multiplier. Then, the dual multiplier trajectories are added to the buffers corresponding to each network configuration, and we draw dual multipliers from this dynamically evolving buffer as the training progresses.

Training of the primal-GNN is succeeded by the optimal dual multiplier estimation phase and the training of the dual-GNN. We split the training dataset further into training and validation datasets with a $7:1$ ratio. For each given $\h$, the regression targets, $\bblambda^\dagger(\h; \bbphi^\star)$, are estimated by averaging dual multipliers in the corresponding dual multiplier sampling buffer. Given the regression features in \eqref{eq:example-dual-regression-feature} and targets, we also train the dual model for $N_{\mathrm{DR}} = 100$ epochs with a batch size of $B_{\h} = 32$. Regarding the dual-regression training parameters, we observed that, compared to an $L_2$ (MSE) loss, the $L_1$ loss places greater emphasis on the accurate prediction of the large (impactful) multipliers while respecting the sparsity. We also note that our experiments are not sensitive to the value of $\lambda_{\max}$, which was set conservatively at $\lambda_{\max} = 50.0$.

For state-augmented inference, the dual step size $\eta_{\bblambda}$ is the most critical hyperparameter as it governs the trade-off between asymptotic objective optimality, and finite-time feasibility of the constraints. In Fig.~\ref{fig:dual-step-size-ablation}, we show the impact of various dual multiplier step sizes. 

\begin{figure*}[ht!]
    \centering
    
    \begin{minipage}{.64\linewidth}
    \includegraphics[width = \linewidth]{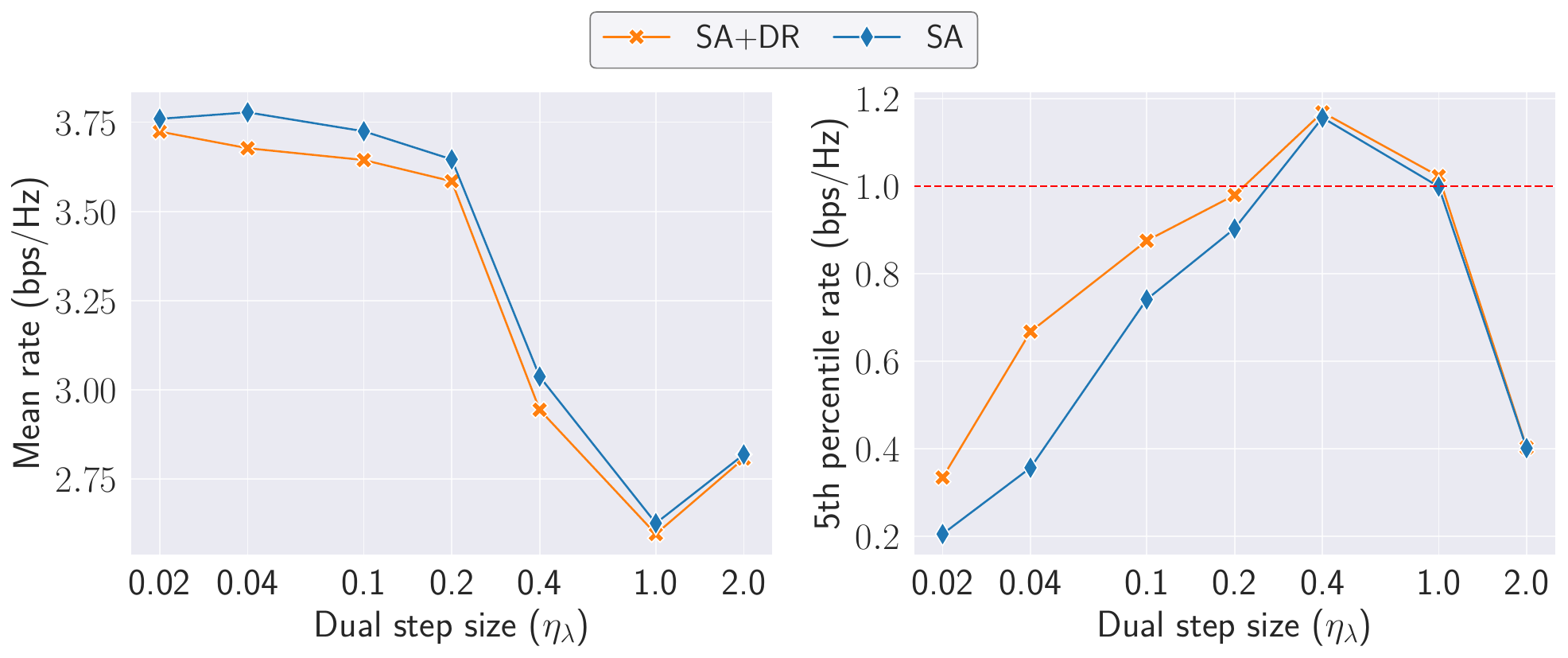}
        \subcaption[]{}
    \end{minipage}
    \hfill
    \begin{minipage}{.35\linewidth}
    \vspace{1.2em}
    \includegraphics[width = \linewidth]{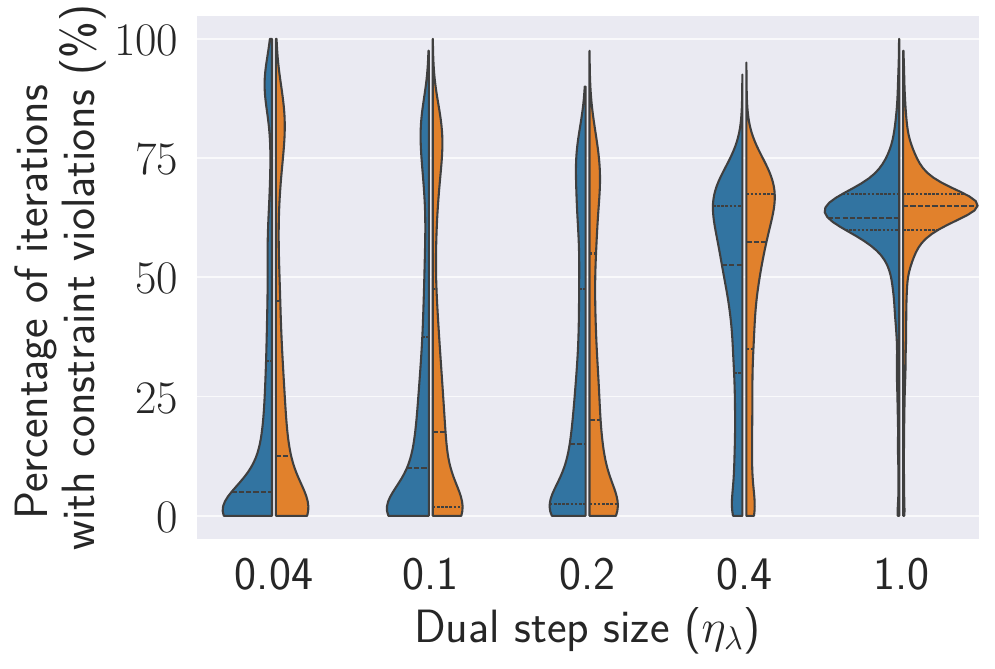}
        \subcaption[]{}
    \end{minipage}
    \hfill
    \vfill 
    \caption{Effect of the dual step size $\eta_{\bblambda}$. (a) Mean and 5th-percentile rates. Loss of objective utility grows with $\eta_{\bblambda}$ and the benefit of near-optimal initialization steadily vanishes; while too small a step size is impractical, as percentile rates take noticeably longer to approach $f_{\min}$. Both trends match our theoretical analysis. (b) Distribution across users of the percentage of infeasible iterations, i.e., windows of $T_0$ time steps in which a user's $T_0$-averaged rate falls below $f_{\min}$. Violations generally increase with $\eta_{\bblambda}$. Overall, a sweet spot is $\eta_{\bblambda} \in [0.05, 0.25]$.
    }
    \label{fig:dual-step-size-ablation}
\end{figure*}



\section{Proof of Lemma~\ref{lemma:boundedness_of_D}}
\begin{proof}\label{proof:boundedness_of_D}
    For any $\bblambda \in \ccalG_{\epsilon}$, the result follows from the definitions of $\ccalG_{\epsilon}$, the dual function, and strict feasibility as
        \begin{align}
            &D^\star_{\bbtheta} + \epsilon \geq g_{\bbtheta}(\bblambda) = \max_{\bbtheta} \ccalL(\bbtheta, \bblambda) \geq \ccalL(\bbtheta^\dagger, \bblambda) \\
            &\geq f_0 \left( \h, \bbp_{\bbtheta}\big(\h; \bbtheta^\dagger  \big)  \right) + \bblambda^\top \bbf \left( \h, \bbp_{\bbtheta}\big( \h; \bbtheta^\dagger \big) \right) \\
            &\geq f_0 \left( \h, \bbp_{\bbtheta}\big(\h; \bbtheta^\dagger  \big)  \right) + \xi \| \bblambda \|_1.
            \label{eq:1_norm_bound}
        \end{align}
    Rewriting~\eqref{eq:1_norm_bound}, we have
    \begin{equation}
        \norm{\bblambda}_1 \leq \frac{D^\star_{\bbtheta} + \epsilon - f_0 \left( \bbf \big( \h, \bbp_{\bbtheta}\big(\h; \bbtheta^\dagger \big) \big) \right)}{\xi},
    \end{equation}
    and the proof is complete.
\end{proof}

\section{Proof of Lemma~\ref{lemma:DGA_expected_lambda_norm_squared}}

\begin{proof}\label{proof:DGA_expected_lambda_norm_squared}
We can relate the expected squared distance of a dual iterate from an arbitrary optimal dual multiplier $\bblambda^\star \in \bbLambda^\star$ at iteration $k+1$ to the squared distance at iteration $k$ as
    \begin{align}
        &\E \left[ \norm{\bblambda_{k+1} - \bblambda^\star}^2 \cond \bblambda_k \right] \\
        &\leq\E \left[ \norm{\bblambda_k - \eta_{\bblambda} \widehat{\bbs}(\bblambda_k) \!-\! \bblambda^\star}^2 \cond \bblambda_k \right] \label{eq:expected_lambda_norm_squared_step_a} \\
        &\leq \norm{\bblambda_k \!-\! \bblambda^\star}^2 + \eta_{\bblambda}^2 S^2 \!-\! 2\eta_{\bblambda} \E \left[ \widehat{\bbs}(\bblambda_k)^T \left( \bblambda_k \!-\! \bblambda^\star \right) \!\cond\! \bblambda_k \right]\!\!, \label{eq:expected_lambda_norm_squared_step_b}
        \\
        &\leq \norm{\bblambda_k - \bblambda^\star}^2 +\eta_{\bblambda}^2 S^2 - 2\eta_{\bblambda} \left[ g_{\bbtheta}(\bblambda_k) - D^\star_{\bbtheta} \right] \label{eq:expected_lambda_norm_squared_step_c}
\end{align}\label{eq:expected_lambda_norm_squared}
\noindent where~\eqref{eq:expected_lambda_norm_squared_step_a} follows by the nonexpansiveness of the projection and norm,~\eqref{eq:expected_lambda_norm_squared_step_b} follows by the boundedness of second moments and~\eqref{eq:expected_lambda_norm_squared_step_c} follows by the definition of a subgradient of the dual function. This completes the proof.   
\end{proof}

\section{Proof of Proposition~\ref{prop:convergence_dual_multiplier_averages}}
We start by proving a lemma and its immediate corollary.

\begin{lemma}[Finite-time ergodic complementary slackness]\label{lemma:finite_time_ergodic_complementary_slackness}  
For any $\bblambda^\star$, it holds that
    \begin{align}\label{eq:finite_time_ergodic_complementary_slackness}
        \E \left[ \frac{1}{K} \sum_{k=0}^{K-1} \widehat{\bbs}(\bblambda_k)^T (- \bblambda_k - \bblambda^\star) \cond \bblambda_0 \right] \leq \eta_{\bblambda} \frac{S^2}{2} + \frac{\norm{\bblambda_0 - \bblambda^\star}^2}{2\eta_{\bblambda}K}.
    \end{align}
\end{lemma}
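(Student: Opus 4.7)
My plan is to derive the stated bound by unrolling the squared-distance recursion for $\norm{\bblambda_k - \bblambda^\star}^2$ under the stochastic dual descent update in~\eqref{eq:DGA:dual}, and then repackaging the resulting inner product into the complementary-slackness form claimed on the left-hand side of~\eqref{eq:finite_time_ergodic_complementary_slackness}. The starting step, which precedes the convexity invocation in Lemma~\ref{lemma:DGA_expected_lambda_norm_squared}, is the nonexpansivity bound
\begin{equation*}
\norm{\bblambda_{k+1} - \bblambda^\star}^2 \le \norm{\bblambda_k - \eta_{\bblambda}\,\widehat{\bbs}(\bblambda_k) - \bblambda^\star}^2,
\end{equation*}
which holds because $\bblambda^\star \in \reals^N_+$ is a fixed point of the projection onto $\reals^N_+$. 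Expanding the right-hand side isolates a linear cross term $-2\eta_{\bblambda}\,\widehat{\bbs}(\bblambda_k)^\top(\bblambda_k - \bblambda^\star)$ and a quadratic term $\eta_{\bblambda}^2\norm{\widehat{\bbs}(\bblambda_k)}^2$.

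I will next take the conditional expectation given $\bblambda_0$, apply Assumption~\ref{ass:unbiased_and_bounded_subgradients} to bound $\E[\norm{\widehat{\bbs}(\bblambda_k)}^2] \leq S^2$ and to push the expectation through the linear inner product via unbiasedness, and sum the rearranged recursion over $k=0,\ldots,K-1$. The squared-distance terms telescope, and discarding the nonnegative terminal term $\E[\norm{\bblambda_K - \bblambda^\star}^2\mid\bblambda_0]\geq 0$ leaves the clean ergodic inequality
\begin{equation*}
\frac{1}{K}\sum_{k=0}^{K-1}\E\big[\widehat{\bbs}(\bblambda_k)^\top(\bblambda_k - \bblambda^\star) \bigm| \bblambda_0\big] \leq \eta_{\bblambda}\frac{S^2}{2} + \frac{\norm{\bblambda_0 - \bblambda^\star}^2}{2\eta_{\bblambda}K},
\end{equation*}
i.e., the dual-suboptimality bound arising directly from projected stochastic subgradient descent on $g_{\bbtheta}$.

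To arrive at the stated form, in which the inner product reads $\widehat{\bbs}(\bblambda_k)^\top(-\bblambda_k - \bblambda^\star)$, I will then invoke complementary slackness at optimality, $\bbs_{\bbtheta}(\bblambda^\star)^\top\bblambda^\star = 0$, together with the nonnegativity of the optimal constraint slack $\bbs_{\bbtheta}(\bblambda^\star)\in\reals^N_+$ that is implied by Assumption~\ref{ass:strict_feasibility}. Writing the stated LHS as $-\E\bigl[\tfrac{1}{K}\sum\widehat{\bbs}(\bblambda_k)^\top\bblambda_k\mid\bblambda_0\bigr] - \E\bigl[\tfrac{1}{K}\sum\widehat{\bbs}(\bblambda_k)^\top\bblambda^\star\mid\bblambda_0\bigr]$ and matching each of these ergodic averages term by term against the inequality just obtained is the route I will take to recover~\eqref{eq:finite_time_ergodic_complementary_slackness}, modulo careful sign bookkeeping.

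The main obstacle is precisely this sign reconciliation: the natural recursion produces $\widehat{\bbs}(\bblambda_k)^\top(\bblambda_k - \bblambda^\star)$, whereas the lemma is worded in terms of $\widehat{\bbs}(\bblambda_k)^\top(-\bblambda_k - \bblambda^\star)$. Bridging the two requires a separate ergodic bound on the magnitude of $\widehat{\bbs}(\bblambda_k)^\top\bblambda_k$, which I plan to obtain by an auxiliary ``energy'' argument: the inequality $\norm{\bblambda_{k+1}}^2 \leq \norm{\bblambda_k - \eta_{\bblambda}\widehat{\bbs}(\bblambda_k)}^2$ (again by nonexpansivity of the projection onto $\reals^N_+$, since $\mathbf{0}\in\reals^N_+$), combined with telescoping and Assumption~\ref{ass:unbiased_and_bounded_subgradients}, produces the complementary bound on $\E\bigl[\tfrac{1}{K}\sum\widehat{\bbs}(\bblambda_k)^\top\bblambda_k\mid\bblambda_0\bigr]$ in terms of $\norm{\bblambda_0}^2$ and $\eta_{\bblambda}S^2$. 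Combining the two telescoping bounds, while keeping the signs induced by the orthant projection intact, should yield exactly the inequality~\eqref{eq:finite_time_ergodic_complementary_slackness} as stated.
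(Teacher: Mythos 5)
Your first two paragraphs reproduce the paper's proof exactly: nonexpansivity of the projection onto $\reals^N_+$ at the fixed point $\bblambda^\star$, expansion of the square, the bound $\E\bigl[\norm{\widehat{\bbs}(\bblambda_k)}^2\bigr]\leq S^2$ from Assumption~\ref{ass:unbiased_and_bounded_subgradients}, the tower property, telescoping over $k=0,\ldots,K-1$, and discarding the nonnegative terminal term. That argument is complete and yields
\begin{equation*}
\E\Bigl[\tfrac{1}{K}\textstyle\sum_{k=0}^{K-1}\widehat{\bbs}(\bblambda_k)^\top(\bblambda_k-\bblambda^\star)\cond\bblambda_0\Bigr]\;\leq\;\eta_{\bblambda}\tfrac{S^2}{2}+\tfrac{\norm{\bblambda_0-\bblambda^\star}^2}{2\eta_{\bblambda}K},
\end{equation*}
which is the inequality the paper actually proves and the one it uses downstream: both Corollary~\ref{corollary:ergodic_complementary_slackness} and the proof of Proposition~\ref{prop:convergence_dual_multiplier_averages} carry the inner product $\widehat{\bbs}(\bblambda_k)^\top(\bblambda_k-\bblambda^\star)$. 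The ``$-\bblambda_k-\bblambda^\star$'' in the lemma statement is a sign typo for ``$\bblambda_k-\bblambda^\star$''; you should stop once the telescoped bound is in hand.

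The ``sign reconciliation'' program in your last two paragraphs is therefore unnecessary, and as designed it would not close. Decomposing the literal left-hand side into $-\E\bigl[\tfrac{1}{K}\sum\widehat{\bbs}(\bblambda_k)^\top\bblambda_k\cond\bblambda_0\bigr]-\E\bigl[\tfrac{1}{K}\sum\widehat{\bbs}(\bblambda_k)^\top\bblambda^\star\cond\bblambda_0\bigr]$ leaves you needing sign control of $\widehat{\bbs}(\bblambda_k)^\top\bblambda^\star$ along the \emph{iterates}; complementary slackness at optimality only gives $\bbs_{\bbtheta}(\bblambda^\star)^\top\bblambda^\star=0$ and says nothing about $\bbs_{\bbtheta}(\bblambda_k)^\top\bblambda^\star$, whose components corresponding to violated constraints are negative, so this inner product has no fixed sign under Assumptions~\ref{ass:unbiased_and_bounded_subgradients} and~\ref{ass:strict_feasibility}. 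Likewise, the auxiliary energy telescoping on $\norm{\bblambda_{k+1}}^2\leq\norm{\bblambda_k-\eta_{\bblambda}\widehat{\bbs}(\bblambda_k)}^2$ produces a bound featuring $\norm{\bblambda_0}^2$ rather than $\norm{\bblambda_0-\bblambda^\star}^2$, so combining the two telescopes cannot reproduce the stated right-hand side. Deliver the bound with $(\bblambda_k-\bblambda^\star)$ and flag the typo in the statement instead of trying to prove it literally.
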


\begin{proof}\label{proof:finite_time_ergodic_complementary_slackness}
The proof follows the steps of~\cite[Lemma~1]{calvo2021state}. By Lemma~\ref{lemma:DGA_expected_lambda_norm_squared}, we have
    \begin{align}
        \E& \left[ \norm{\bblambda_{k+1} - \bblambda^\star}^2 \cond \bblambda_k \right] \\
        &\hspace{-.5cm}\leq \norm{\bblambda_k \!-\! \bblambda^\star}^2 \!+\! \eta_{\bblambda}^2 S^2 - 2\eta_{\bblambda} \E \left[ \widehat{\bbs}(\bblambda_k)^T \left( \bblambda_k \!-\! \bblambda^\star \right) \cond \bblambda_k \right]. \label{eq:eq}
\end{align}

Applying~\eqref{eq:eq} recursively and using the tower property of conditional expectation, we obtain 
\begin{align}
    \E &\left[ \frac{1}{K} \sum_{k=0}^{K-1} \widehat{\bbs}(\bblambda_k) (\bblambda_k - \bblambda^\star) \cond \bblambda_0 \right] \leq \eta_{\bblambda} \frac{S^2}{2}  \\
    &+ \frac{1}{2\eta_{\bblambda} K} \left( \norm{\bblambda_0 - \bblambda^\star}^2 - \E \big[\norm{\bblambda_K - \bblambda^\star}^2 \cond \bblambda_0 \big] \right) \nonumber \\
    &\leq \eta_{\bblambda} \frac{S^2}{2} + \frac{\norm{\bblambda_0 - \bblambda^\star}^2}{2\eta_{\bblambda} K}.
\end{align}
This completes the proof of the lemma.
\end{proof}

\begin{corollary}[Ergodic complementary slackness]\label{corollary:ergodic_complementary_slackness}
    \begin{align}
        \liminf_{K \to \infty} \E \left[ \frac{1}{K} \sum_{k=0}^{K-1} \widehat{\bbs}(\bblambda_k)^T (\bblambda_k - \bblambda^\star) \right] \leq \eta_{\bblambda} \frac{S^2}{2}.
    \end{align}
\end{corollary}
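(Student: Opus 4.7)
The plan is to deduce the corollary as an immediate consequence of Lemma~\ref{lemma:finite_time_ergodic_complementary_slackness} by passing to the limit. The content is already essentially captured by the finite-time bound
$$\E \left[ \frac{1}{K} \sum_{k=0}^{K-1} \widehat{\bbs}(\bblambda_k)^\top (\bblambda_k - \bblambda^\star) \,\Big|\, \bblambda_0 \right] \leq \eta_{\bblambda} \frac{S^2}{2} + \frac{\norm{\bblambda_0 - \bblambda^\star}^2}{2\eta_{\bblambda}K},$$
which holds for every $K \geq 1$ and every realization of $\bblambda_0$. The only $K$-dependent piece on the right-hand side is the last term, so the strategy is simply to remove the conditioning on $\bblambda_0$ and then take $\liminf_{K\to\infty}$.

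First, I would pass from the conditional to the unconditional expectation. If $\bblambda_0$ is a deterministic initialization, as is customary at the beginning of the inference phase, the conditioning is trivial and nothing needs to be done. If $\bblambda_0$ is random, I would apply the tower property and take the outer expectation over $\bblambda_0$. Here one needs $\E[\norm{\bblambda_0 - \bblambda^\star}^2] < \infty$; this is guaranteed by Corollary~\ref{corollary:bounded_optimal_multipliers} (which ensures $\bblambda^\star$ can be taken in a bounded set $\bbLambda^\star_0$) together with a mild finite-second-moment assumption on the initialization, so the bound becomes
$$\E \left[ \frac{1}{K} \sum_{k=0}^{K-1} \widehat{\bbs}(\bblambda_k)^\top (\bblambda_k - \bblambda^\star) \right] \leq \eta_{\bblambda} \frac{S^2}{2} + \frac{\E[\norm{\bblambda_0 - \bblambda^\star}^2]}{2\eta_{\bblambda}K}.$$

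Second, I would take $\liminf_{K\to\infty}$ of both sides. The right-hand side remainder is of order $1/K$ and therefore vanishes, yielding
$$\liminf_{K \to \infty}\; \E \left[ \frac{1}{K} \sum_{k=0}^{K-1} \widehat{\bbs}(\bblambda_k)^\top (\bblambda_k - \bblambda^\star) \right] \leq \eta_{\bblambda} \frac{S^2}{2},$$
which is exactly the claim.

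No step is genuinely an obstacle: the substantive work was already done inside Lemma~\ref{lemma:finite_time_ergodic_complementary_slackness} via the recursion on $\E[\norm{\bblambda_{k+1} - \bblambda^\star}^2 \cond \bblambda_k]$ from Lemma~\ref{lemma:DGA_expected_lambda_norm_squared}. The only minor subtlety worth calling out is ensuring that the $1/K$ prefactor really does vanish after removing the conditioning, which is why I explicitly invoke Corollary~\ref{corollary:bounded_optimal_multipliers} to pick a bounded representative of $\bblambda^\star$ when $\bblambda_0$ is random.
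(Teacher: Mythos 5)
Your proposal is correct and follows exactly the paper's route: the paper's own proof of this corollary is precisely ``take the liminf of both sides of the finite-time bound in Lemma~\ref{lemma:finite_time_ergodic_complementary_slackness} and marginalize over $\bblambda_0$.'' Your added care about the integrability of $\norm{\bblambda_0 - \bblambda^\star}^2$ when $\bblambda_0$ is random is a reasonable (and slightly more explicit) touch, but it does not change the argument.
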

\begin{proof}\label{proof:ergodic_complementary_slackness}
Taking the liminf of both sides in~\eqref{eq:finite_time_ergodic_complementary_slackness} and marginalizing over $\bblambda_0$ proves the corollary.
\end{proof}

\begin{proof}\label{proof:convergence_dual_multiplier_averages}
By Lemma~\ref{lemma:finite_time_ergodic_complementary_slackness} and the convexity of the dual function, it immediately follows that
    \begin{align}
        \eta_{\bblambda} \frac{S^2}{2} &+ \frac{\norm{\bblambda_0 - \bblambda^\star}^2}{2\eta_{\bblambda}K}
        \geq \E \left[ \frac{1}{K} \sum_{k=0}^{K-1} \widehat{\bbs}(\bblambda_k)^T (\bblambda_k - \bblambda^\star) \cond \bblambda_0 \right] \\
        &\geq \E \left[ \frac{1}{K} \sum_{k=0}^{K-1} \big[ g_{\bbtheta}(\bblambda_k) - D^\star_{\bbtheta} \big] \cond \bblambda_0 \right] \\
        &\geq g_{\bbtheta} \Big( \E \big[ \bar{\bblambda}_K \cond \bblambda_0 \big] \Big) - D^\star_{\bbtheta}. \label{eq:dual_function_of_average_dual_multiplier}
    \end{align}

From~\eqref{eq:dual_function_of_average_dual_multiplier}, it follows that
\begin{align}
    \norm{\E \big[ \bar{\bblambda}_K \cond \bblambda_0 \big] - \bblambda^\star} \leq B_{\bblambda} \left(1 + \frac{\norm{\bblambda_0 - \bblambda^\star}^2}{K \eta_{\bblambda}^2 S^2} \right).
\end{align}
We finish the proof with
\begin{align} 
    \liminf_{k \to \infty} \; \mathrm{dist}\big( \E[\bar{\bblambda}_K], \, \bbLambda^\star \big) \leq B_{\bblambda}.
\end{align}
\end{proof}

\hspace{1em}
\section{Proof of Proposition~\ref{proposition:dual_function_best_convergence}}\label{proof:dual_function_best_convergence}
\begin{proof}

Almost sure convergence proof follows exactly the steps of \cite[Thm.~2]{ribeiro2010ergodic} and relies on a classical supermartingale convergence theorem \cite{robbins1985}. We restate the proof and refer the reader to \cite[Ch.~2]{Schirotzek1986} for a detailed convergence analysis.

\begin{lemma}{\cite{robbins1985}}\label{thm:supermartingale_convergence}
Let $\alpha_k, \beta_k$ be two nonnegative stochastic processes adapted to the filtration $\{ \ccalF_k \}_{k \geq 0}$ satisfying 
\begin{equation}
    \begin{aligned}
        \E[\alpha_{k+1} \cond \ccalF_k] \leq \alpha_k - \beta_k
    \end{aligned}
\end{equation}
for all $k \geq 0$. Then the sequence $\alpha_k$ converges almost surely and $\sum_{k=0}^{\infty} \beta_k < \infty$ almost surely.
\end{lemma}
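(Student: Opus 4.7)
The plan is to reduce the statement to the classical Doob nonnegative supermartingale convergence theorem. The key observation is that $\alpha_k$ itself need not be a supermartingale --- the inequality $\E[\alpha_{k+1} \cond \ccalF_k] \leq \alpha_k - \beta_k$ has a ``leftover'' $-\beta_k$ term --- but this extra term can be absorbed by accumulating the past $\beta_j$'s on the other side. Concretely, I would introduce the process
\[
M_k := \alpha_k + \sum_{j=0}^{k-1} \beta_j, \qquad k \geq 0,
\]
with the convention that the empty sum is zero, so $M_0 = \alpha_0$. Since $\alpha_k, \beta_j \geq 0$ and both sequences are $\{\ccalF_k\}$-adapted, $M_k$ is a nonnegative, adapted process.

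Next, I would verify that $M_k$ is a supermartingale. Using that $\beta_k$ is $\ccalF_k$-measurable by the adaptedness hypothesis, together with the hypothesized one-step descent, a one-line computation gives
\[
\E[M_{k+1} \cond \ccalF_k] = \E[\alpha_{k+1} \cond \ccalF_k] + \sum_{j=0}^{k} \beta_j \leq (\alpha_k - \beta_k) + \sum_{j=0}^{k} \beta_j = M_k,
\]
so $M_k$ is a nonnegative supermartingale (the standard implicit regularity $\E[\alpha_0] < \infty$ guarantees integrability at every step, by the telescoping bound $\E[M_k] \leq \E[M_0] = \E[\alpha_0]$). Doob's nonnegative supermartingale convergence theorem then provides a finite nonnegative random variable $M_\infty$ such that $M_k \to M_\infty$ almost surely.

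From here both conclusions follow by inspection. The partial sums $S_k := \sum_{j=0}^{k-1} \beta_j$ are monotonically nondecreasing in $k$ (since $\beta_j \geq 0$) and dominated by $M_k$ (since $\alpha_k \geq 0$). Almost-sure convergence of $M_k$ implies $\sup_k M_k < \infty$ a.s., hence $S_k \uparrow S_\infty < \infty$ a.s., which is precisely $\sum_{j \geq 0} \beta_j < \infty$ almost surely. Finally, writing $\alpha_k = M_k - S_k$ as a difference of two almost-surely convergent sequences yields the a.s.\ convergence of $\alpha_k$ to $M_\infty - S_\infty$.

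The only substantive step is the appeal to Doob's theorem, which is classical; the remainder is routine bookkeeping. The subtle point that must not be missed is that $\beta_k$ must be $\ccalF_k$-measurable (not merely $\ccalF_{k+1}$-measurable) for the supermartingale computation to go through --- this is precisely why the hypothesis insists that both $\alpha_k$ and $\beta_k$ be adapted to the \emph{same} filtration $\{\ccalF_k\}_{k \geq 0}$, and this is the only place where the adaptedness of $\beta_k$ is actually used.
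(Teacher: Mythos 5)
Your proof is correct. Note, however, that the paper does not actually prove this lemma: it is quoted verbatim from the Robbins--Siegmund supermartingale convergence theorem \cite{robbins1985} and used as a black box inside the proof of Proposition~2, so there is no ``paper proof'' to compare against step by step. What you have written is the standard self-contained argument for the special case stated here (the full Robbins--Siegmund result allows an extra $(1+\gamma_k)\alpha_k+\delta_k$ slack with $\sum_k\gamma_k,\sum_k\delta_k<\infty$; your compensator trick extends to that case with a further renormalization). Your reduction is sound: $M_k := \alpha_k + \sum_{j=0}^{k-1}\beta_j$ is nonnegative, adapted, and the one-step computation $\E[M_{k+1}\cond\ccalF_k]\leq M_k$ is valid precisely because $\beta_k$ is $\ccalF_k$-measurable, which you correctly flag as the only place adaptedness of $\beta_k$ is used. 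The two conclusions then follow from Doob's theorem exactly as you say, since $0\leq S_k\leq M_k$ forces the monotone partial sums to converge and $\alpha_k=M_k-S_k$ is a difference of a.s.\ convergent sequences. The one genuine technical point---that Doob's theorem requires $\E[M_0]=\E[\alpha_0]<\infty$, or else one must condition on $\ccalF_0$ and argue on the event $\{\alpha_0<\infty\}$---is acknowledged in your parenthetical and is harmless in the paper's application, where $\alpha_0=\|\bblambda_0-\bblambda^\star\|^2\cdot\gamma_0$ is deterministic given $\bblambda_0$.
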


We define two processes $\{ \alpha_k \}_{k \geq 0}$ and $\{ \beta_k \}_{k \geq 0}$,
    \begin{align}
        \alpha_k :=& \norm{\bblambda_k - \bblambda^\star}^2 \cdot \gamma_k \\
        \beta_k :=& \Big[2 \eta_{\bblambda} \left[ g_{\bbtheta}(\bblambda_k) - D^\star_{\bbtheta} \right] - \eta_{\bblambda}^2 S^2 \Big] \cdot \gamma_k,
    \end{align}
\noindent where $\gamma_k := \mathbbm{1} \left\{ \dbest{k} - D^\star_{\bbtheta} < \eta_{\bblambda} \frac{S^2}{2} \right\}$. 

Note that $\alpha_k, \beta_k, \gamma_k \geq 0$ for all $k$. Thus, it suffices to show that $\alpha_k$ and $\beta_k$ both meet the assumptions of Lemma~\ref{thm:supermartingale_convergence} and hence $\beta_k$ is almost surely summable, which yields the desired convergence result for the best dual function optimality gap.

First, we trivially have $\alpha_{k+l} = \gamma_{k+l} = \beta_{k+l} = 0$ for all $l \geq 0$ when $\alpha_k = 0$ and in particular, $\E[\alpha_{k+l} \cond \bblambda_k, \alpha_k = 0] = 0 = \alpha_k - \beta_k$. Given $\alpha_k > 0$ for a fixed $k$, we have $\gamma_k = 1$ and thus
    \begin{align}
        &\E[\alpha_{k+1} \cond \bblambda_k, \alpha_k > 0] \leq \E[\norm{\bblambda_{k+1} - \bblambda^\star}^2 \cond \bblambda_k] \\
        &\leq \norm{\bblambda_{k+1} - \bblambda^\star}^2 + \eta_{\bblambda}^2  S^2 - 2 \eta_{\bblambda} \big[ \dbest{k} - D^\star_{\bbtheta} \big] \\
        &= \alpha_k - \beta_k.
    \end{align}

Combining the two cases, for all $k \geq 0$, we have
    \begin{align}
        \E[\alpha_{k+1} \cond \ccalF_k] &= \E[\alpha_{k+1} \cond \bblambda_k, \alpha_k = 0] \cdot \Pr{\alpha_k = 0} \\
        &+ \E[\alpha_{k+1} \cond \bblambda_k, \alpha_k > 0] \cdot \Pr{\alpha_k > 0} \\
        &= \alpha_k - \beta_k.
    \end{align}

The proof of the proposition follows by noting that $\sum_{k=0}^{\infty} \beta_k < \infty$ a.s. implies
\begin{equation}
    \begin{aligned}
    \liminf_{k \to \infty}&  ~\Big[ 2\eta_{\bblambda} [ g_{\bbtheta}(\bblambda_k) - D^\star_{\bbtheta}] - \eta_{\bblambda}^2 S^2 \Big] \\
    &\times \mathbbm{1}\left\{ \dbest{k} - D^\star_{\bbtheta} < \eta \frac{S^2}{2} \right\} = 0 \quad \text{a.s.}
    \end{aligned}    
\end{equation}
Equivalently, either $\dbest{k} - D^\star_{\bbtheta} \leq \eta_{\bblambda} \frac{S^2}{2}$ for some finite $k$ or $\liminf_{k \to \infty}{ g_{\bbtheta}(\bblambda_k) - D^\star_{\bbtheta}} = \eta_{\bblambda} \frac{S^2}{2}$ almost surely.
\end{proof}

\section{Proof of Theorem~\ref{thm:excursion_bound}}\label{proof:excursion_bound}

\begin{proof}

We first derive the exponential probability bound over $\ccalE$, and combine the exponential bound with a linear bound to extend the statement of the theorem to arbitrary deviations. 

Given $\bblambda_k \in \ccalE$ and the dual dynamics in~\eqref{eq:dual_dynamics}, we obtain
\begin{align}
    &\E \Big[g_{\bbtheta}(\bblambda_{k+1}) - g_{\bbtheta}(\bblambda_{k}) \cond \bblambda_k \Big] \nonumber \\
    &\leq \E \Big[ s_{\bbtheta}(\bblambda_{k})^\top (\bblambda_{k+1} - \bblambda_{k}) + \frac{L}{2} \| \bblambda_{k+1} - \bblambda_{k} \|^2  \cond \bblambda_{k} \Big] \label{step:lipschitz-smoothness} \\
    &= \E \left[ \widehat{s}_{\bbtheta}(\bblambda_{k})^\top \big(\!-\eta_{\bblambda} \widehat{s}_{\bbtheta}(\bblambda_{k}) \big) \!+\! \frac{L}{2} \| \bblambda_{k+1} - \bblambda_k \|^2  \cond \bblambda_k \right] \\
    &\leq -\eta_{\bblambda} \|  s_{\bbtheta}(\bblambda_{k}) \|^2 + \frac{L}{2} \eta_{\bblambda}^2 \E \left[ \| \widehat{s}_{\bbtheta}(\bblambda_{k}) \|^2 \right],
\end{align}
where we used the Lipschitz smoothness of $g_{\bbtheta}$ in~\eqref{step:lipschitz-smoothness}. By the strong convexity assumption, the squared norm of the subgradients can be lower bounded as $\|  s_{\bbtheta}(\bblambda) \|^2 \geq 2m \cdot \left[ g_{\bbtheta}(\bblambda) - D^\star_{\bbtheta}\right]$ throughout $\ccalE$. In particular, if $G_k = g_{\bbtheta}(\bblambda_k) - D^\star_{\bbtheta} = (1 + \rho)\frac{\kappa}{2} . \eta_{\bblambda} \frac{S^2}{2} $ for any $\rho > 0$, where $\kappa := L/m$ is the condition number, then we have a supermartingale relationship given by
\begin{align} \label{eq:G_k-supermartingale}
    \E[G_{k+1} \cond G_k] \leq G_k - \rho \eta_{\bblambda}^2 L \frac{S^2}{2}.
\end{align}

The sequence $\{G_k\}_{k \geq 0}$ has strictly positive expected decrements since $\E[G_{k+1} \vert G_k] \leq G_k - \mu$, for $\mu = \rho \eta_{\bblambda}^2 L \frac{S^2}{2} > 0$ by ~\eqref{eq:G_k-supermartingale}. Moreover, the sequence has almost surely bounded increments, i.e., $\Pr{ \vert G_{k+1} - G_k \vert \leq \zeta \,\vert\, G_k } = 1$ where
\begin{align}
    \zeta &= \eta_{\bblambda} S^2 \geq \sup_{k > 0} \; \eta_{\bblambda} \|  s_{\bbtheta}(\bblambda_{k}) \|  . \|  \widehat{s}_{\bbtheta}(\bblambda_{k}) \|  \\
    &\geq \sup_{k > 0} \vert s_{\bbtheta}(\bblambda_k)^\top (\bblambda_{k} - \bblambda_{k+1}) \vert\  \geq \sup_{k > 0} \vert G_{k+1} - G_k \vert.
\end{align}

Next, we leverage a result stating that if a supermartingale sequence $\{X_k \}_{k \geq 0}$ exhibits strictly positive expected decrements and almost surely bounded increments properties, then an exponential transformation of that sequence given by $Y_k := e^{\beta X_k}$ still satisfies the supermartingale property where
\begin{align}\label{eq:exponential_transform_beta}
    \beta &:= \frac{2\mu}{\mu^2 + \zeta^2} 
    = \frac{\rho L}{S^2  \left( 1 + \rho^2 \eta_{\bblambda}^2 \frac{L^2}{4} \right) }.
\end{align}
To be more precise, the transformed sequence $H_k := e^{\beta G_k}$, $k = 0, 1, \ldots$, is a supermartingale sequence with
\begin{align}
 &\mathbb{P} \big[ \vert H_{k+1} - H_k \vert \leq \beta \zeta \, \vert \, H_k \big] = 1, \\
 &\E[H_{k+1} \, \vert \, H_k] \leq H_k - \beta \mu.
\end{align}
We refer the reader to~\cite[Lemma 2]{eksin2012distributed} for the proof of the preceding step as the computations involved are not insightful.

Now, we consider an excursion $\{ G_0, G_1, \ldots, G_L \}$ starting at $G_0  = (1 + \rho) \frac{\kappa}{2} .  \eta_{\bblambda} \frac{S^2}{2}$, and finishing at $G_L$ where $L := \min \{ l > 0 \cond G_l < G_0 \}$. With $\beta$ as given in~\eqref{eq:exponential_transform_beta}, we define a new process under the exponential transformation given by $H_l = e^{\beta G_l}, l = 0, 1, \ldots, L$,
and a \emph{stopping time} 
\begin{align}
    L^\star := \min \{ l > 0 : H_{l} < H_0, H_l > e^{\beta \gamma} \}.
\end{align}
where $\gamma > 0$ is the outer excursion boundary.
Note that $L^\star$ is a proper stopping time and moreover, is almost surely finite by Lemma~\ref{proposition:dual_function_best_convergence}. If the first stopping criterion $H_{l} < H_0$ is met, then $G_l < G_0$ by the monotonicity of the exponential transform and the excursion terminates before exceeding $\gamma$. In this case, we have $L^\star = L$. When the second stopping criterion is met, the excursion exceeds $\gamma$ for the first time at some timestep $L^\star < L$. We note the following equivalences
\begin{align}
G^\dagger_0 > \gamma \; \Longleftrightarrow \; G_{L^\star} > \gamma \; \Longleftrightarrow \; H_{L^\star} > e^{\beta \gamma},
\end{align}
and obtain
\begin{align}\label{eq:equivalent_excursion_probability}
    \mathbb{P} \big [ G^\dagger_0 \geq \gamma \, \vert \, G_0 \big] = \mathbb{P} \big[ H_{L^\star} > e^{\beta \gamma} \, \vert \, H_0 \big].
\end{align}
We define the stopped process
\begin{align}\label{eq:stopped_process}
    H_{l \wedge L^\star} := \begin{cases}
        H_l &\quad \text{if $l < L^\star$}, \\
        H_{L^\star} &\quad \text{if $l \geq L^\star$}.
    \end{cases}
\end{align}
Since $L^\star$ is a.s. finite, the stopped process is a martingale by an appeal to Doob's optional stopping theorem~\cite{doob1953stochastic}. In fact, the stopped process $H_{l \wedge L^\star}$ inherits the supermartingale property of $H_l$. When $l < L^\star$, the stopped sequence~\eqref{eq:stopped_process} is identical to the transformed sequence and for $l \geq L^\star$, all values of the stopped process are equal to $H_{L^\star}$. Therefore,
\begin{align}\label{eq:stopped_process_supermartingale}
\E \big[ H_{(l + 1) \wedge L^\star} \, \vert \, H_{l \wedge L^\star} \big] &\leq H_{l \wedge L^\star}, \quad l = 0, 1, \ldots,
\end{align}
and in particular, applying~\eqref{eq:stopped_process_supermartingale} recursively yields
\begin{align}\label{eq:stopped_process_supermartingale_particular}
    \E[H_{L^\star} \, \vert \, H_0] \leq H_0.
\end{align}
Finally, we combine~\eqref{eq:equivalent_excursion_probability} and~\eqref{eq:stopped_process_supermartingale_particular} through Markov's inequality to obtain the exponential probability bound of
\begin{align}
    \mathbb{P} \big[ G^\dagger_0 \geq \gamma \, \vert \, G_0 \big]  &= \mathbb{P}[H_{L^\star} > e^{\beta \gamma} \, \vert \, H_0] \\
    &\leq \E[H_{L^\star} \, \vert \, H_0] e^{-\beta \gamma}  \\
    &\leq H_0 e^{-\beta \gamma} = \exp \left( -\beta(\gamma - G_0) \right).
\end{align}

We finish by combining the exponential probability bound on the deviations within $\ccalE$ with a linear bound outside $\ccalE$.
\end{proof}

\end{appendices}

\bibliographystyle{IEEEbib}
\bibliography{refs.bib}

\end{document}